\documentclass[aps,twocolumn,superscriptaddress,longbibliography]{revtex4}
\usepackage[utf8]{inputenc}
\usepackage{bm}
\usepackage{times}
 \usepackage{algorithm,algorithmic}

\usepackage{mathtools}
\usepackage{amsmath}
\usepackage[shortlabels]{enumitem}
%%%%%%%%%%%%%%%%%%%%
\usepackage{graphicx,epic,eepic,epsfig,amsmath,latexsym,amssymb,verbatim,color}
\usepackage{amsfonts}       % blackboard math symbols
\usepackage{nicefrac}       % compact symbols for 1/2, etc.
\usepackage{amsmath}
\usepackage{bbm}
\usepackage{float}
%for figures
\usepackage{tikz}
\usetikzlibrary{chains}
\usetikzlibrary{fit}
\usepackage{pgflibraryarrows}		%optional
\usepackage{pgflibrarysnakes}		%optional

\usepackage{epsfig}
\usetikzlibrary{shapes.symbols,patterns} % for source symbols
\usepackage{pgfplots}

\usepackage[strict]{changepage}
\usepackage{hyperref}
\hypersetup{colorlinks=true,citecolor=blue,linkcolor=blue,filecolor=blue,urlcolor=blue,breaklinks=true}

\usepackage[marginal]{footmisc}
\usepackage{url}
\usepackage{theorem}

\newtheorem{proposition}{Proposition}
\newtheorem{lemma}[proposition]{Lemma}

\newtheorem{theorem}[proposition]{Theorem}

\newtheorem{corollary}[proposition]{Corollary}

%%% table

\def\squareforqed{\hbox{\rlap{$\sqcap$}$\sqcup$}}
\def\qed{\ifmmode\squareforqed\else{\unskip\nobreak\hfil
\penalty50\hskip1em\null\nobreak\hfil\squareforqed
\parfillskip=0pt\finalhyphendemerits=0\endgraf}\fi}
\def\endenv{\ifmmode\;\else{\unskip\nobreak\hfil
\penalty50\hskip1em\null\nobreak\hfil\;
\parfillskip=0pt\finalhyphendemerits=0\endgraf}\fi}
\newenvironment{proof}{\noindent \textbf{{Proof~} }}{\hfill $\blacksquare$}
\newcounter{remark}

\newcounter{example}

% Align := properly in math mode
\mathchardef\ordinarycolon\mathcode`\:
\mathcode`\:=\string"8000
\def\vcentcolon{\mathrel{\mathop\ordinarycolon}}
\begingroup \catcode`\:=\active
  \lowercase{\endgroup
  \let :\vcentcolon
  }

%%% boxes
\usepackage{cleveref}
\usepackage{graphicx}
\usepackage{xcolor}

\RequirePackage[framemethod=default]{mdframed}
\newmdenv[skipabove=7pt,
skipbelow=7pt,
backgroundcolor=darkblue!15,
innerleftmargin=5pt,
innerrightmargin=5pt,
innertopmargin=5pt,
leftmargin=0cm,
rightmargin=0cm,
innerbottommargin=5pt,
linewidth=1pt]{tBox}

\definecolor{darkblue}{RGB}{0,76,156}
\definecolor{darkkblue}{RGB}{0,0,153}
\definecolor{blue2}{RGB}{102,178,255}
\definecolor{darkred}{RGB}{195,0,0}
\newcommand{\nc}{\newcommand}
\nc{\rnc}{\renewcommand}
\nc{\beg}{\begin{equation}}
\nc{\eeq}{{\end{equation}}}
\nc{\beqa}{\begin{eqnarray}}
\nc{\eeqa}{\end{eqnarray}}
\nc{\lbar}[1]{\overline{#1}}
\nc{\bra}[1]{\langle#1|}
\nc{\ket}[1]{|#1\rangle}
\nc{\ketbra}[2]{|#1\rangle\!\langle#2|}
\nc{\braket}[2]{\langle#1|#2\rangle}

\nc{\proj}[1]{| #1\rangle\!\langle #1 |}
\nc{\avg}[1]{\langle#1\rangle}
%\rnc{\max}{\operatorname{max}}
\nc{\rank}{\operatorname{Rank}}
\nc{\smfrac}[2]{\mbox{$\frac{#1}{#2}$}}
\nc{\tr}{\operatorname{Tr}}
\nc{\ox}{\otimes}
\nc{\dg}{\dagger}
\nc{\dn}{\downarrow}
\nc{\cA}{{\cal A}}
\nc{\cB}{{\cal B}}
\nc{\cC}{{\cal C}}
\nc{\cD}{{\cal D}}
\nc{\cE}{{\cal E}}
\nc{\cF}{{\cal F}}
\nc{\cG}{{\cal G}}
\nc{\cH}{{\cal H}}
\nc{\cI}{{\cal I}}
\nc{\cJ}{{\cal J}}
\nc{\cK}{{\cal K}}
\nc{\cL}{{\cal L}}
\nc{\cM}{{\cal M}}
\nc{\cN}{{\cal N}}
\nc{\cO}{{\cal O}}
\nc{\cP}{{\cal P}}
\nc{\cQ}{{\cal Q}}
\nc{\cR}{{\cal R}}
\nc{\cS}{{\cal S}}
\nc{\cT}{{\cal T}}
\nc{\cV}{{\cal V}}
\nc{\cX}{{\cal X}}
\nc{\cY}{{\cal Y}}
\nc{\cZ}{{\cal Z}}
\nc{\cW}{{\cal W}}
\nc{\csupp}{{\operatorname{csupp}}}
\nc{\qsupp}{{\operatorname{qsupp}}}
\nc{\var}{{\operatorname{var}}}
\nc{\rar}{\rightarrow}
\nc{\lrar}{\longrightarrow}
\nc{\polylog}{{\operatorname{polylog}}}
%\nc{\1}{{\mathds{1}}}
\nc{\wt}{{\operatorname{wt}}}
\nc{\av}[1]{{\left\langle {#1} \right\rangle}}
\nc{\supp}{{\operatorname{supp}}}

\nc{\argmin}{{\operatorname{argmin}}}

\def\t{\theta}

\def\x{\xi}

\nc{\RR}{{{\mathbb R}}}
\nc{\CC}{{{\mathbb C}}}
\nc{\FF}{{{\mathbb F}}}
\nc{\NN}{{{\mathbb N}}}
\nc{\ZZ}{{{\mathbb Z}}}
\nc{\PP}{{{\mathbb P}}}
\nc{\QQ}{{{\mathbb Q}}}
\nc{\UU}{{{\mathbb U}}}
\nc{\EE}{{{\mathbb E}}}
\nc{\id}{{\operatorname{id}}}

\nc{\CHSH}{{\operatorname{CHSH}}}

% wcl
\nc{\be}{\begin{equation}}
\nc{\ee}{{\end{equation}}}
\nc{\bea}{\begin{eqnarray}}
\nc{\eea}{\end{eqnarray}}
\nc{\<}{\langle}
\rnc{\>}{\rangle}
% \nc{\Hom}[2]{\mbox{Hom}(\CC^{#1},\CC^{#2})}
\nc{\rU}{\mbox{U}}

% switching between |i> or |e_i> for the standard basis
% \nc{\ob}[1]{e_{#1}}
\nc{\ob}[1]{#1}

\nc{\SEP}{{\text{\rm SEP}}}
\nc{\NS}{{\text{\rm NS}}}
\nc{\LOCC}{{\text{\rm LOCC}}}
\nc{\PPT}{{\text{\rm PPT}}}
\nc{\EXT}{{\text{\rm EXT}}}
% \nc{\NS}{{\text{NS}}}
\nc{\Sym}{{\operatorname{Sym}}}

%\nc{\ln}{{\text{ln}}}

\nc{\ERLO}{{E_{\text{r,LO}}}}
\nc{\ERLOCC}{{E_{\text{r,LOCC}}}}
\nc{\ERPPT}{{E_{\text{r,PPT}}}}
\nc{\ERLOCCinfty}{{E^{\infty}_{\text{r,LOCC}}}}
\nc{\Aram}{{\operatorname{\sf A}}}

\usepackage{tikz}
\usepackage{hyperref}
\hypersetup{colorlinks=true,citecolor=blue,linkcolor=blue,filecolor=blue,urlcolor=blue,breaklinks=true}

%% tikz corrodinate
\makeatletter
\def\grd@save@target#1{%
  \def\grd@target{#1}}
\def\grd@save@start#1{%
  \def\grd@start{#1}}
\tikzset{
  grid with coordinates/.style={
    to path={%
      \pgfextra{%
        \edef\grd@@target{(\tikztotarget)}%
        \tikz@scan@one@point\grd@save@target\grd@@target\relax
        \edef\grd@@start{(\tikztostart)}%
        \tikz@scan@one@point\grd@save@start\grd@@start\relax
        \draw[minor help lines,magenta] (\tikztostart) grid (\tikztotarget);
        \draw[major help lines] (\tikztostart) grid (\tikztotarget);
        \grd@start
        \pgfmathsetmacro{\grd@xa}{\the\pgf@x/1cm}
        \pgfmathsetmacro{\grd@ya}{\the\pgf@y/1cm}
        \grd@target
        \pgfmathsetmacro{\grd@xb}{\the\pgf@x/1cm}
        \pgfmathsetmacro{\grd@yb}{\the\pgf@y/1cm}
        \pgfmathsetmacro{\grd@xc}{\grd@xa + \pgfkeysvalueof{/tikz/grid with coordinates/major step}}
        \pgfmathsetmacro{\grd@yc}{\grd@ya + \pgfkeysvalueof{/tikz/grid with coordinates/major step}}
        \foreach \x in {\grd@xa,\grd@xc,...,\grd@xb}
        \node[anchor=north] at (\x,\grd@ya) {\pgfmathprintnumber{\x}};
        \foreach \y in {\grd@ya,\grd@yc,...,\grd@yb}
        \node[anchor=east] at (\grd@xa,\y) {\pgfmathprintnumber{\y}};
      }
    }
  },
  minor help lines/.style={
    help lines,
    step=\pgfkeysvalueof{/tikz/grid with coordinates/minor step}
  },
  major help lines/.style={
    help lines,
    line width=\pgfkeysvalueof{/tikz/grid with coordinates/major line width},
    step=\pgfkeysvalueof{/tikz/grid with coordinates/major step}
  },
  grid with coordinates/.cd,
  minor step/.initial=.2,
  major step/.initial=1,
  major line width/.initial=2pt,
}
\makeatother

% restatable problems environment
\usepackage{thmtools}
\usepackage{thm-restate}
\usepackage{etoolbox}
\makeatletter
\def\problem@s{}
\newcounter{problems@cnt}

\newcommand{\allproblems}{\problem@s}
\makeatother
\definecolor{colorone}{rgb}{1,0.36,0.03}
\definecolor{colortwo}{rgb}{0.54,0.71,0.03}
\definecolor{colorthree}{rgb}{0.01,0.51,0.93}
\definecolor{colorfour}{rgb}{0.47,0.26,0.58}
\definecolor{red2}{rgb}{0.8,0.1,0.1}
\usepackage{tcolorbox}
\usepackage{relsize}
\usepackage{graphicx}
\usepackage{subfigure}
\nc{\st}{\text{subject to} \ }
\nc{\supre}{\text{supremum} \ }
\nc{\sdp}{\text{sdp}}
\nc{\cU}{\mathcal U}
\usepackage[qm]{qcircuit}
\usepackage{array}

\newcommand{\update}[1]{\textcolor{black}{#1}}

\begin{document}
\title{Variational Quantum Algorithms for Trace Distance and Fidelity Estimation}
\author{Ranyiliu Chen}
\author{Zhixin Song}
\author{Xuanqiang Zhao}
\author{Xin Wang}
\email{wangxin73@baidu.com}
\affiliation{Institute for Quantum Computing, Baidu Research, Beijing 100193, China}
\begin{abstract} 
Estimating the difference between quantum data is crucial in quantum computing. However, as typical characterizations of quantum data similarity, the trace distance and quantum fidelity are believed to be exponentially-hard to evaluate in general. In this work, we introduce hybrid quantum-classical algorithms for these two distance measures on near-term quantum devices where no assumption of input state is required. First, we introduce the Variational Trace Distance Estimation (VTDE) algorithm. We in particular provide the technique to extract the desired spectrum information of any Hermitian matrix by local measurement. A novel variational algorithm for trace distance estimation is then derived from this technique, with the assistance of a single ancillary qubit. Notably, VTDE could avoid the barren plateau issue with logarithmic depth circuits due to a local cost function. Second, we introduce the Variational Fidelity Estimation (VFE) algorithm. We combine Uhlmann's theorem and the freedom in purification to translate the estimation task into an optimization problem over a unitary on an ancillary system with fixed purified inputs. We then provide a purification subroutine to complete the translation. Both algorithms are verified by numerical simulations and experimental implementations, exhibiting high accuracy for randomly generated mixed states.
\end{abstract}  

\date{\today}
\maketitle

\section{Introduction}
With surging advances in material, manufacturing, and quantum control, quantum computing has been driven into the noisy intermediate-scale quantum (NISQ) era~\cite{Preskill2018}, which requires novel algorithms running on a limited number of qubits with unwanted interference of the environment. The hybrid quantum-classical computation framework~\cite{McClean2016} is regarded as well-suited for execution on NISQ devices and is expected to show practical near-term applications in quantum chemistry and quantum machine learning~\cite{Biamonte2017a,Schuld2015a,Arunachalam2017}. Specifically, hybrid quantum-classical algorithms utilize the parameterized quantum circuits (PQCs)~\cite{Benedetti2019a} and classical optimization to solve problems. Such idea was applied to many key areas including Hamiltonian ground and excited states preparation~\cite{Peruzzo2014,Nakanishi2019}, quantum compiling~\cite{Sharma2020}, quantum classification~\cite{Schuld2018,Grant2018,Li2021}, Gibbs state preparation~\cite{Yuan2018a,Wu2019b,Wang2020a,Chowdhury2020}, entanglement manipulation~\cite{Zhao2021}, and quantum linear algebra~\cite{Xu2019,Bravo-Prieto2019,Huang2019b,Wang2020d,Bravo-Prieto2019a}. We refer to~\cite{Endo2020, cerezo2021variational, bharti2021noisy} for a detailed review.

Applications mentioned above and many other quantum information tasks suffer from unwanted interactions with the environment when implemented by real-world quantum systems, leading to errors in their working qubits. Thus, metric estimation for quantum states is vital to benchmark the tasks’ implementation. As scalable quantum computers and quantum error correction are still on their way~\cite{Terhal2015}, estimating metric on a near-term device is essential to the verification of quantum information processing tasks and quantify how well quantum information has been preserved. Moreover, metric estimation is also an integral part of quantum machine learning~\cite{Biamonte2017a,Schuld2015a,Arunachalam2017}. For example, metrics could play the role of the loss function in state learning tasks \cite{Hu_2019,braccia2020enhance}.

The trace distance and fidelity are two typical metrics to quantify how close two quantum states are \cite{Nielsen2010,Jozsa1994,Uhlmann1976}.
Given two quantum states $\rho$ and $\sigma$, the trace distance $D(\rho,\sigma)$ and the fidelity $F(\rho,\sigma)$ are defined as follows:
    \begin{align}
D(\rho, \sigma)&:=\frac{1}{2}\|\rho-\sigma\|_1,\\
F(\rho, \sigma)&:= \tr \sqrt{\sqrt{\rho} \sigma \sqrt{\rho}}
= \Vert \sqrt{\rho} \sqrt{\sigma} \Vert_{1},
    \end{align}
where $\|\cdot\|_1$ denotes the trace norm. 
When at least one of the states is pure, the task of fidelity estimation reduces to the simple case of calculating the square root of the state overlap $F(\rho,\sigma) 
=\sqrt{\tr \rho\sigma}$ which can be obtained by the Swap-test \cite{Buhrman_2001}. Thus the trace distance in this case is bounded by $1-F(\rho,\sigma)\le D(\rho,\sigma)\le\sqrt{1-F(\rho,\sigma)^2}$. However, evaluating these two metrics for mixed states is hard in general. One might attempt to classically compute the two metric from the matrix description of quantum states $\rho$ and $\sigma$ obtained via quantum state tomography. Nevertheless, this approach is infeasible due to the exponential growth of the matrix dimension with the number of qubits. On quantum computers, evaluating the trace distances is probably hard since even judging whether $\rho$ and $\sigma$ have large or small trace distance is known to be QSZK-complete~\cite{watrous2008quantum}, where QSZK
(quantum statistical zero-knowledge) is a complexity
class that includes BQP (bounded-error quantum
polynomial time). Hence estimating fidelity and trace distance could probably be hard even for quantum computers.

Several approaches have been proposed for the trace distance and fidelity estimation, and here we focus on the case where the metric is between two unknown general quantum states. In \cite{Cerezo2019}, the authors variationally estimate the truncated fidelity via a hybrid classical-quantum algorithm. The truncated fidelity bounds the exact fidelity and is a good approximation of it when one of the states is known to have a low rank. For the estimation of trace distance, several methods are proposed but only applicable in specific quantum environments \cite{Zhang_2019,Smirne_2011}. To the best of our knowledge, no methods for estimating the trace distance on general NISQ devices has been proposed yet.

To overcome these challenges, we propose the Variational Trace Distance Estimation (VTDE) algorithm as well as the Variational Fidelity Estimation (VFE) algorithm. 
First, we propose a method to estimate the trace norm of an arbitrary Hermitian matrix $H$, and apply it to trace distance estimation by specifying $H=\frac{1}{2}(\rho-\sigma)$. In particular, we prove that local measurements on an ancillary single qubit can extract the desired spectrum information of any Hermitian $H$ conjugated by a unitary. By optimizing over all unitaries, we can obtain the trace norm of $H$. Our method notably only employs a local observable in the loss function evaluation, which saves us from the gradient vanishing issue with shallow circuits.
Second, we introduce a method to estimate the fidelity of general quantum states, which utilizes the Uhlmann's theorem. We observe that the fidelity can be estimated by optimizing over all unitaries on an ancillary system via the freedom in purification. We also introduce a subroutine that works on NISQ devices to purify quantum states. With the performance analysis of the purification subroutine, we show that only few ancillary qubits are required if the unknown states are low-rank. 

This paper is organized as follows. In Sec.~\ref{sec:trace distance}, we introduce the variational quantum algorithms for trace norm and trace distance estimation.  
In Sec.~\ref{sec:fidelity}, we introduce the variational quantum algorithms for quantum state fidelity estimation and its purification subroutine for mixed state learning. Numerical experiments and experimental implementations on IMB superconducting device are provided to show the validity of our methods. We finally deliver concluding remarks in Sec.~\ref{sec:conclusion}.

\section{Variational Trace Distance Estimation}\label{sec:trace distance}
This section introduces a variational quantum algorithm for estimating the trace norm of an arbitrary Hermitian matrix $H$, which could be easily applied to trace distance estimation. Our method employs the optimization of a parameterized quantum circuit (PQC), requiring only one ancillary qubit initialized in an arbitrary pure state (typically $\proj{0}$ in practice) and single-qubit measurements. In this sense, our algorithm is practical and efficient for NISQ devices.

We will frequently use symbols such as $\mathcal{H}_A$ and $\mathcal{H}_B$ to denote Hilbert spaces associated with quantum systems $A$ and $B$, respectively. We use $d_A$ to denote the dimension of system $A$. The set of linear operators acting on $A$ is denoted by $\cL(\mathcal{H}_A)$. We usually write an operator with a subscript indicating the system that the operator acts on, such as $M_{AB}$, and write $M_A:=\tr_B M_{AB}$. Note that for a linear operator $X\in\cL(\mathcal{H}_A)$, we define $|X|=\sqrt{X^\dagger X}$, and the trace norm of $X$ is given by $\|X\|_1=\tr |X|$.

We first formulate the theory for the trace norm estimation, then we describe in detail the process of our algorithm, followed by the numerical experiments.

\subsection{Estimating trace norm via one-qubit overlap maximization}
Suppose on system $\mathcal{H}_A\otimes\mathcal{H}_B$ we have a Hermitian matrix $H_{AB}$ that has spectral decomposition 
\begin{equation}
    H_{AB} = \sum_{j=1}^{d_Ad_B}h^j_{AB}\proj{\psi_j}
    \label{eq:spedec}
\end{equation}
with decreasing spectrum $\{h^j_{AB}\}_{j=1}^{d_Ad_B}$ and orthonormal basis $\{\ket{\psi_j}\}$. Here, for technical convenience, we simply pad the spectrum with 0s to ensure the expression in Eq.~\eqref{eq:spedec}. 

Before showing the main algorithm, we introduce an optimization method to obtain some information of the spectrum of a given Hermitian matrix as follows.
\begin{proposition}\label{prop:H sum of some eig}
For any Hermitian matrix $H_{AB}\in\cL{(\mathcal{H}_A\otimes\mathcal{H}_B)}$ with spectral decomposition as Eq.~\eqref{eq:spedec}, respectively denote the dimension of $\mathcal{H}_A$, $\mathcal{H}_B$ by $d_A$, $d_B$. It holds that
\begin{equation}
\max_{U} \tr \proj{0}_A \widetilde{H}_A =\sum_{j=1}^{d_B}h^j_{AB},
\end{equation}
where $\widetilde{H}_A=\tr_B \widetilde{H}_{AB}$, $\widetilde{H}_{AB} = UH_{AB}U^\dagger$, and the optimization is over all unitaries.
\label{prop:AB}
 
\end{proposition}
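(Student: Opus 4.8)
The plan is to recast the objective as the overlap between a \emph{fixed} projector and the unitary orbit of $H_{AB}$, and then invoke the variational (Ky Fan) characterization of sums of the largest eigenvalues. First I would rewrite the cost function by pushing the state $\proj{0}_A$ back onto the full space. Since $\widetilde{H}_A=\tr_B[U H_{AB} U^\dagger]$, the partial-trace identity $\tr_A[X_A\tr_B[Y_{AB}]]=\tr_{AB}[(X_A\otimes I_B)Y_{AB}]$ gives
\[
\tr\big[\proj{0}_A \widetilde{H}_A\big]=\tr\big[(\proj{0}_A\otimes I_B)\,U H_{AB} U^\dagger\big].
\]
Denoting $P:=\proj{0}_A\otimes I_B$, this is an orthogonal projector of rank $d_B$ on $\mathcal{H}_A\otimes\mathcal{H}_B$, so the whole problem becomes maximizing $\tr[P\,U H_{AB}U^\dagger]$ over unitaries $U$.

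Next I would exploit cyclicity of the trace to move the unitary onto the projector: $\tr[P\,U H_{AB}U^\dagger]=\tr[(U^\dagger P U)\,H_{AB}]$. The key observation is then that as $U$ ranges over all unitaries, $U^\dagger P U$ ranges over \emph{exactly} the set of all rank-$d_B$ orthogonal projectors on $\mathcal{H}_A\otimes\mathcal{H}_B$: conjugation by a unitary preserves Hermiticity, idempotency, and rank, and conversely any two projectors of the same rank are related by a unitary (map an orthonormal basis of one range to that of the other, and the complement to the complement). Hence
\[
\max_U \tr\big[\proj{0}_A \widetilde{H}_A\big]=\max_{Q}\,\tr[Q\,H_{AB}],
\]
where the right-hand maximum is over all rank-$d_B$ orthogonal projectors $Q$.

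Finally I would apply Ky Fan's maximum principle: the maximum of $\tr[Q\,H_{AB}]$ over rank-$k$ projectors equals the sum of the $k$ largest eigenvalues of $H_{AB}$. Taking $k=d_B$ and using that $\{h^j_{AB}\}$ is listed in decreasing order yields $\sum_{j=1}^{d_B}h^j_{AB}$, with the optimizer $Q$ being the projector onto the span of the top $d_B$ eigenvectors $\{\ket{\psi_j}\}_{j=1}^{d_B}$; unwinding the conjugation then exhibits the optimal $U$ as any unitary sending that span into $\ket{0}_A\otimes\mathcal{H}_B$.

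The step that must be handled carefully is the reduction to a fixed-spectrum optimization, namely the claim that sweeping $U$ over all unitaries realizes every rank-$d_B$ projector in $\tr[(U^\dagger P U)H_{AB}]$; once that is established the result is exactly the standard top-$k$ eigenvalue bound. If a self-contained argument is preferred over citing Ky Fan, I would prove the two directions directly: attainability by the explicit $U$ above, and the upper bound by writing $\tr[Q\,H_{AB}]=\sum_j h^j_{AB}\,\langle\psi_j|Q|\psi_j\rangle$ with weights $w_j:=\langle\psi_j|Q|\psi_j\rangle\in[0,1]$ satisfying $\sum_j w_j=\tr Q=d_B$, and maximizing this linear functional over that constraint polytope, whose optimum puts unit weight on the $d_B$ largest $h^j_{AB}$.
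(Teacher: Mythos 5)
Your proposal is correct, and it reaches the result by a slightly different (though closely related) route than the paper. The paper's proof stays in the computational basis: it writes $\tr[\proj{0}_A\widetilde{H}_A]=\sum_{j=1}^{d_B}\widetilde{H}_{AB}^{j,j}$ as a sum of the first $d_B$ diagonal entries of $\widetilde{H}_{AB}$ and then invokes the Schur majorization theorem (the spectrum of a Hermitian matrix majorizes its diagonal) to bound this sum by $\sum_{j=1}^{d_B}h^j_{AB}$, with equality when $U$ diagonalizes $H_{AB}$ appropriately. You instead move the unitary onto the projector via cyclicity, observe that $U^\dagger(\proj{0}_A\otimes I_B)U$ sweeps out exactly the rank-$d_B$ orthogonal projectors, and apply Ky Fan's maximum principle. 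The two key lemmas are essentially equivalent facts from majorization theory, but the proof architectures differ: the paper's argument is shorter given the citation, while yours has the advantage of isolating the one step that genuinely needs care (that the unitary orbit of a fixed projector is the full set of equal-rank projectors), of exhibiting the optimizer explicitly, and of admitting the self-contained elementary finish you sketch (writing $\tr[QH_{AB}]=\sum_j h^j_{AB}w_j$ with $w_j\in[0,1]$, $\sum_j w_j=d_B$, and maximizing over that polytope), which avoids citing either Ky Fan or Schur. Both are complete and correct.
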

This Proposition indicates that the optimization over unitaries can estimate the sum of several largest eigenvalues of $H_{AB}$. The proof of Proposition \ref{prop:AB} can be found in Appendix \ref{proof:H sum of some eig}.

To apply Proposition \ref{prop:AB} to estimate the trace norm, we fix subsystem $A$ to be single-qubit. In this case, for arbitrary Hermitian $H_{AB}$ on $d=2^n$-dimensional quantum system $\mathcal{H}_A\otimes\mathcal{H}_B$, where $\mathcal{H}_A=\CC^2$ (hence $\mathcal{H}_B=\CC^{2^{n-1}}$), the maximal expectation of measurements on system $A$ gives us the sum of the first half eigenvalues of $H_{AB}$. It is now quite close to the trace norm of $H_{AB}$, which is the sum of absolute values of its eigenvalues. We take the final step to estimate the trace norm by appending a 1-qubit pure state and optimizing twice. Concretely, we derive the following theorem to ensure the validity of our algorithm.
\begin{theorem}\label{thm:tr}
For any Hermitian $H_A$ on $n$-qubit system $A$, and any single-qubit pure state $\ket{r}$ on system $R$, it holds that
\begin{align}
\|H_A\|_1 = \max_{U^+} \tr \proj{0}_R Q_R^++\max_{U^-}\tr \proj{0}_R Q_R^-,
\end{align}
where $Q^\pm_R = \tr_AQ^\pm_{AR}$, $Q_{AR}^\pm = U^\pm (\pm H_A \otimes \proj{r}_R) U^{\pm\dagger}$, and each optimization is over unitaries on system $AR$.
\end{theorem}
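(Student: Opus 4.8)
The plan is to derive Theorem~\ref{thm:tr} as a direct application of Proposition~\ref{prop:AB}, with the single-qubit ancilla $R$ playing the role of the \emph{measured} subsystem and the $n$-qubit system $A$ playing the role of the complementary subsystem. Concretely, I would apply Proposition~\ref{prop:AB} to the two Hermitian operators $\pm H_A\otimes\proj{r}_R\in\cL(\cH_A\otimes\cH_R)$, identifying $\cH_R$ with the ``$A$'' of the Proposition (so there $d_A=2$) and $\cH_A$ with its ``$B$'' (so there $d_B=2^n$). Proposition~\ref{prop:AB} then says that $\max_{U^\pm}\tr\proj{0}_R Q_R^\pm$ equals the sum of the $2^n$ largest eigenvalues of $\pm H_A\otimes\proj{r}_R$. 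The entire argument thus reduces to evaluating these two sums of eigenvalues and showing they add up to $\|H_A\|_1$.

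The crux is the spectral bookkeeping. Since $\proj{r}_R$ is a rank-one projector on a qubit, its eigenvalues are $1$ and $0$; hence the spectrum of $H_A\otimes\proj{r}_R$ consists of the $2^n$ eigenvalues $\{\lambda_i\}_{i=1}^{2^n}$ of $H_A$ (with eigenvectors $\ket{v_i}\otimes\ket{r}$) together with $2^n$ additional zeros (from the eigenvectors $\ket{v_i}\otimes\ket{r^\perp}$). So $+H_A\otimes\proj{r}_R$ has $2^{n+1}$ eigenvalues in total, of which exactly $2^n$ are padding zeros. I would then argue that among these $2^{n+1}$ values the $2^n$ largest are precisely the strictly positive eigenvalues of $H_A$, with the remaining slots filled by zeros: no negative eigenvalue can enter the top-$2^n$ window, because after listing the positive eigenvalues the block of $2^n$ padding zeros outnumbers (or equals) the negative eigenvalues, so the window closes within the zero block before reaching any negative value. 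Hence the sum of the $2^n$ largest eigenvalues of $+H_A\otimes\proj{r}_R$ equals $\sum_{\lambda_i>0}\lambda_i$. The identical argument applied to $-H_A\otimes\proj{r}_R$, whose nonzero spectrum is $\{-\lambda_i\}$, yields $\sum_{\lambda_i<0}|\lambda_i|$.

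Finally I would combine the two contributions: writing $\|H_A\|_1=\sum_i|\lambda_i|=\sum_{\lambda_i>0}\lambda_i+\sum_{\lambda_i<0}|\lambda_i|$, the two maximizations sum exactly to the trace norm, giving the claimed identity. Note this also confirms the intuition stated before the theorem, namely that the single-qubit ``$+$'' optimization recovers the positive part of the spectrum and the ``$-$'' optimization the (absolute value of the) negative part.

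The main obstacle I anticipate is keeping the spectral counting airtight rather than any analytic difficulty: one must verify that tensoring with a single-qubit rank-one projector pads the spectrum with \emph{exactly} as many zeros as $H_A$ has eigenvalues, and that this many zeros suffices to guarantee the top-$2^n$ window of Proposition~\ref{prop:AB} captures the full positive part of the spectrum while excluding every negative eigenvalue. Once that counting is pinned down, the remainder is a routine translation of notation between the Proposition and the theorem.
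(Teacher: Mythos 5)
Your proposal is correct and follows essentially the same route as the paper: both apply Proposition~\ref{prop:AB} with the single-qubit ancilla $R$ as the measured subsystem, observe that the spectrum of $\pm H_A\otimes\proj{r}_R$ is that of $\pm H_A$ padded with $2^n$ zeros so the top-$2^n$ window captures exactly the positive (resp.\ negative in absolute value) part of the spectrum, and sum the two contributions to get $\|H_A\|_1$. Your explicit counting argument for why no negative eigenvalue enters the window is a slightly more careful spelling-out of what the paper states more tersely.
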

This theorem shows how to generally evaluate the trace norm of arbitrary $H$ by optimization. The proof of Theorem \ref{thm:tr} can be found in Appendix \ref{proof:tr}.

We note that, since the target Hermitian matrix is in many cases (and can always be) written as the linear combination of density matrix with real coefficients $c^j$:
\begin{equation}
    H_{A}=\sum_jc^j\rho_{A}^j,
    \label{eq:hrho}
\end{equation}
we have $\tr H_A=\sum_jc^j$. Employing the information of $\tr H_A$ we can save one optimization as the following Corollary shows:
\begin{corollary}\label{corollary:trace norm}
For any Hermitian $H_A$ (written as Eq. \eqref{eq:hrho}) on $n$-qubit system $A$, and any single-qubit pure state $\ket{r}$ on system $R$, it holds that
\begin{equation}
    \begin{aligned}
        \|H_A\|_1=&\sum_{h_A^j>0}h_A^j+\sum_{h_A^j<0}-h_A^j\\
        =&2\sum_{h_A^j>0}h_A^j-\tr H_A\\
        =&2\max_{U} \tr \proj{0}_R Q_R-\sum_jc^j,
    \end{aligned}
    \label{eq:cor3}
\end{equation}
where $Q_R=\tr_A Q_{AR}$, $Q_{AR} = U(H_A\otimes \proj{r}_R)U^{\dagger}$, and the optimization is over all unitaries on system $AR$.
\label{cor:oneopt}
\end{corollary}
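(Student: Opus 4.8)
The plan is to verify the three equalities in Eq.~\eqref{eq:cor3} in order, treating the first two as elementary spectral bookkeeping and reducing the third to Theorem~\ref{thm:tr} (equivalently Proposition~\ref{prop:AB}) together with the normalization $\tr\rho_A^j = 1$. For the first equality I would recall that for Hermitian $H_A$ one has $\|H_A\|_1 = \tr|H_A| = \sum_j|h_A^j|$, and split this into its positive and negative parts (zero eigenvalues contribute nothing). For the second I would use that the trace is the sum of all eigenvalues, $\tr H_A = \sum_{h_A^j>0}h_A^j + \sum_{h_A^j<0}h_A^j$, so that $\sum_{h_A^j<0}(-h_A^j) = \sum_{h_A^j>0}h_A^j - \tr H_A$; substituting this collapses the two sums into $2\sum_{h_A^j>0}h_A^j - \tr H_A$.

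The third equality has two ingredients. First, from $H_A = \sum_j c^j\rho_A^j$ and $\tr\rho_A^j = 1$ we get $\tr H_A = \sum_j c^j$, which replaces the trace term. Second, I need $\sum_{h_A^j>0}h_A^j = \max_U\tr\proj{0}_R Q_R$. Here I would observe that the operator $Q_{AR} = U(H_A\otimes\proj{r}_R)U^\dagger$ is exactly the $Q_{AR}^+$ of Theorem~\ref{thm:tr} with $U = U^+$, so this single optimization coincides with the first maximization of that theorem; equivalently, I would apply Proposition~\ref{prop:AB} to $H_A\otimes\proj{r}_R$, identifying the measured single qubit $R$ with the Proposition's system $A$ and the $n$-qubit system $A$ with its system $B$, which returns the sum of the $d_A = 2^n$ largest eigenvalues of $H_A\otimes\proj{r}_R$.

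The crux, and the only step requiring care, is evaluating that top-$2^n$ sum. The spectrum of $H_A\otimes\proj{r}_R$ is $\{h_A^j\}_{j=1}^{2^n}$ (from the eigenvalue $1$ of $\proj{r}$) together with $2^n$ extra zeros (from its eigenvalue $0$). Since there are at most $2^n$ positive $h_A^j$ and the tensoring supplies exactly $2^n$ zeros, the $2^n$ largest eigenvalues are precisely the positive $h_A^j$ padded out with zeros, never reaching a negative eigenvalue; hence the sum equals $\sum_{h_A^j>0}h_A^j$. This establishes the second ingredient and closes the chain of equalities, so the main obstacle is just this zero-padding count, with everything else routine.
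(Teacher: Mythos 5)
Your proposal is correct and follows essentially the same route as the paper: the first two equalities are the same spectral bookkeeping, and the third reduces to the identity $\max_{U}\tr\proj{0}_R Q_R=\sum_{h_A^j>0}h_A^j$, which is exactly the first half of the paper's proof of Theorem~\ref{thm:tr} (applying Proposition~\ref{prop:AB} to $H_A\otimes\proj{r}_R$ with the single-qubit register as the measured subsystem), together with $\tr H_A=\sum_j c^j$. Your explicit zero-padding count for why the top $2^n$ eigenvalues of $H_A\otimes\proj{r}_R$ are the positive $h_A^j$ padded with zeros is the same observation the paper makes, just spelled out in slightly more detail.
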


Also in this case, $\max_{U}\tr \proj{0}_R Q_R$ in practice can be evaluated by post-processing: 
\begin{equation}
    \tr \proj{0}_R Q_R=\sum_jc^j\tr \proj{0}_R(U\rho^j_{A}\otimes\proj{r}_RU^\dagger)_R.
\end{equation}
This means that we could use PQC, taking $\rho^j_A$ as inputs, to perform the optimization for the trace norm estimation. In next section we will adopt this strategy to estimate the trace distance, for which $H=\frac{1}{2}(\rho-\sigma)$.

\subsection{Variational trace distance estimation algorithm}

Based on corollary \ref{cor:oneopt}, we are now ready to show our trace distance estimation (VTDE) algorithm for arbitrary quantum states $\rho$ and $\sigma$. Fig. \ref{fig:VTDE-diagram} demonstrates the diagram of our algorithm. 

\begin{figure}[ht]
	\centering
	\includegraphics[width=0.48\textwidth]{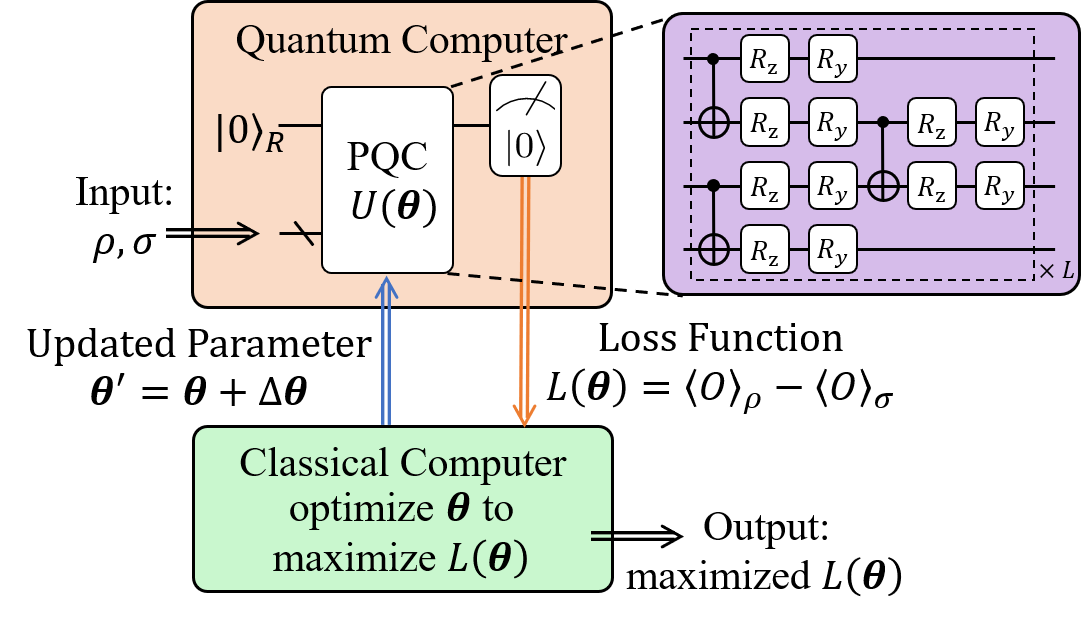}
	\caption{Diagram of VTDE. VTDE employs a single-qubit ancilla for arbitrary number of qubits of $\rho$ and $\sigma$. The outputs $\avg{O}_\rho,\avg{O}_\sigma$ are obtained by locally measuring the overlap with state $\ket{0}_R$. Then the optimization of the measurement outcome is undertaken by a classical computer (the green box). The unitary evolution on the coupled system is simulated by a hardware-efficient ansatz, of which the single qubit rotation gates ($R_y,R_z$) are controlled by classical parameters $\bm\theta$.}
	\label{fig:VTDE-diagram}
\end{figure}

Specifically, with $H_A=\frac{1}{2}(\rho_A-\sigma_A)$ and $\tr H_A=0$, Eq. \eqref{eq:cor3} could be written as 
\begin{equation}
    \begin{aligned}
        D(\rho,\sigma)=&\frac{1}{2}\|\rho_A-\sigma_A\|_1\\
        =&\max_{U} \left(\tr \proj{0}_R \widetilde{\rho}_R-\tr \proj{0}_R \widetilde{\sigma}_R\right),
    \end{aligned}
\end{equation}
where $\widetilde{\rho}_{AR}=U(\rho_A\otimes\proj{r}_R)U^\dagger$, $\rho_R=\tr_A\rho_{AR}$ and $\widetilde{\sigma}_{AR}=U(\sigma_A\otimes\proj{r}_R)U^\dagger$, $\sigma_R=\tr_A\sigma_{AR}$. This observation finally enables us to estimate the trace distance by optimization over unitaries, and the algorithm is given in Algorithm \ref{alg:vtde} in detail. In our VTDE algorithm, $\tr \proj{0}_R \widetilde{\rho}_R$ and $\tr \proj{0}_R \widetilde{\sigma}_R$ are evaluated successively, then the difference between them are maximized by a classical computer. In practice, the ancillary pure state can be initially set as $\ket{r}=\ket{0}$. We adopt a hardware efficient ansatz \cite{Kandala_2017,commeau2020variational} consisting of parameterized single-qubit $R_y$ and $R_z$ rotations, along with CNOT gates on adjacent qubits as entanglement gates (also see Fig.~\ref{fig:VTDE-diagram}) to implement the tunable unitary evolution.

\renewcommand{\algorithmicrequire}{\textbf{Input:}}
\renewcommand{\algorithmicensure}{\textbf{Output:}}
\vspace{0cm}
\begin{algorithm}[H] 
\caption{Variational Trace Distance Estimation (VTDE)}
\label{alg:vtde}
\begin{algorithmic} 
\REQUIRE quantum states $\rho_A$ and $\sigma_A$, circuit ansatz of unitary $U_{AR}(\bm\theta)$, number of iterations ITR;
\ENSURE an estimate of trace distance $D(\rho_A,\sigma_A)$.

\STATE Initialize parameters $\bm\theta$.

\STATE Append $\rho_A$ and $\sigma_A$ with single-qubit state $\ket{0}_R$, respectively.

\FOR{itr $=1,\ldots,$ ITR}
 
\STATE Apply $U_{AR}(\bm\theta)$ to $\rho_A\otimes\proj{0}_R$ and $\sigma_A\otimes\proj{0}_R$, obtain the states $\widetilde{\rho}_{AR}=U_{AR}(\bm\theta)\rho_A\otimes \proj{0}_RU_{AR}(\bm\theta)^\dagger$ and $\widetilde{\sigma}_{AR}=U_{AR}(\bm\theta)\sigma_A\otimes \proj{0}_RU_{AR}(\bm\theta)^\dagger$, respectively.
 
\STATE Evaluate $O_\rho=\tr \proj{0}_R\widetilde{\rho}_{R}$ and $O_\sigma=\tr \proj{0}_R\widetilde{\sigma}_{R}$ by measurement on system $R$. 

\STATE Compute the loss function $\cL_1:=O_\rho-O_\sigma$.
 
\STATE Maximize the loss function $\cL_1$ and update parameters $\bm\theta$.

\ENDFOR
%\STATE Repeat 3-6 until the loss function $\cL_1(\bm\theta)$ converges with tolerance $\varepsilon$, which means $|\Delta \cL_1| < \varepsilon$;

\STATE Output the optimized $\cL_1$ as the trace distance estimate.
\end{algorithmic}
\end{algorithm}

From Algorithm \ref{alg:vtde} one can tell that our VTDE estimates the trace distance for arbitrary quantum states without requiring any pre-knowledge. Furthermore, Algorithm \ref{alg:vtde} and Fig. \ref{fig:VTDE-diagram} imply that VTDE employs a single-qubit ancillary qubit for any input size $n$ of $\rho$ and $\sigma$, and perform local (single-qubit) measurement in each iteration. In this sense, VTDE is general and efficient for NISQ devices. Recall that Algorithm \ref{alg:vtde} can be adapted to trace norm estimation for any $H$ with decomposition $H=\sum_jc^j\rho^j$ based on Corollary \ref{corollary:trace norm}.

Given the VTDE, we briefly discuss analytic gradient and the gradient vanishing (barren plateau \cite{mcclean2018barren}) issues. Analytic gradient enables us to perform gradient descent to optimize the parameters. As our VTDE employs an hardware-efficient PQC, the parameter shift rule \cite{PhysRevA.98.032309,PhysRevA.99.032331} is capable in our case to obtain analytic gradient. As for the barren plateau from which many variational quantum algorithms may suffer. We remark that our VTDE performs single-qubit measurements and takes the result as the loss function. This is essentially equivalent to a local observable, which has been proved to have at worst a polynomially vanishing gradient with a shallow PQC \cite{Cerezoa}. In this sense, our VTDE could avoid the barren plateau issue when the number of layers $L\in\mathcal{O}(\log n)$.

We would like to remark that, the Naimark extension (see Example 2.2 of \cite{Wilde_2013}) together with the property of trace distance that 
\begin{align}
    D(\rho,\sigma)=\max_{0\le P\le 1}\tr P(\rho-\sigma)
    \label{eq:naimark}
\end{align}
also explain VTDE. Indeed our idea is to study the capability of parameterized quantum circuits in extracting spectrum information from local measurements. Nevertheless, VTDE can be generalized for trace norm estimation of arbitrary $H$ (Theorem \ref{thm:tr} and Corollary \ref{corollary:trace norm}), where trace norm estimation in Eq. \eqref{eq:naimark} does not hold if we simply replace $(\rho - \sigma)$ by a non-traceless $H$. This generalization can be lead to wider applications in quantum information like the estimation of log negativity. Our intermediate product (Proposition \ref{prop:AB}) also indicates that the partial sum of the spectrum can be extracted by local measurements.

\subsection{Numerical experiments}
Numerical experiments are undertaken to demonstrate the validity and advantage of VTDE. All simulations including optimization loops are implemented via Paddle Quantum~\cite{Paddlequantum} on the PaddlePaddle Deep Learning Platform~\cite{Paddle,Ma2019}.

We firstly estimate the trace distance between the 4-qubit GHZ state $\ket{\psi}=\frac{1}{\sqrt{2}}(\ket{0000}+\ket{1111})$ and its output after the depolarizing channel:
\begin{align}
    \text{Dep}_p(\rho)=p\tr\rho\frac{I}{2^4}+(1-p)\rho.
\end{align}

In the experiment setting, we let $\update{\rho=\proj{\psi}}$ and $\sigma=\text{Dep}_p(\rho)$, where the channel parameters are $p=\{0.1,0.3,0.5,0.7,0.9\}$. For the hyper-parameters, we set ITR $=120$ and learning rate LR $=0.02$. Fig. \ref{fig:iteration} shows the trace distance learned by VTDE versus the numbers of iterations. As one can see, for all considered channel parameters, the trace distance between $\rho$ and $\sigma$ can be estimated accurately within feasible iterations ($<120$).

\begin{figure}[h]
    \centering
    \includegraphics[width=0.42\textwidth]{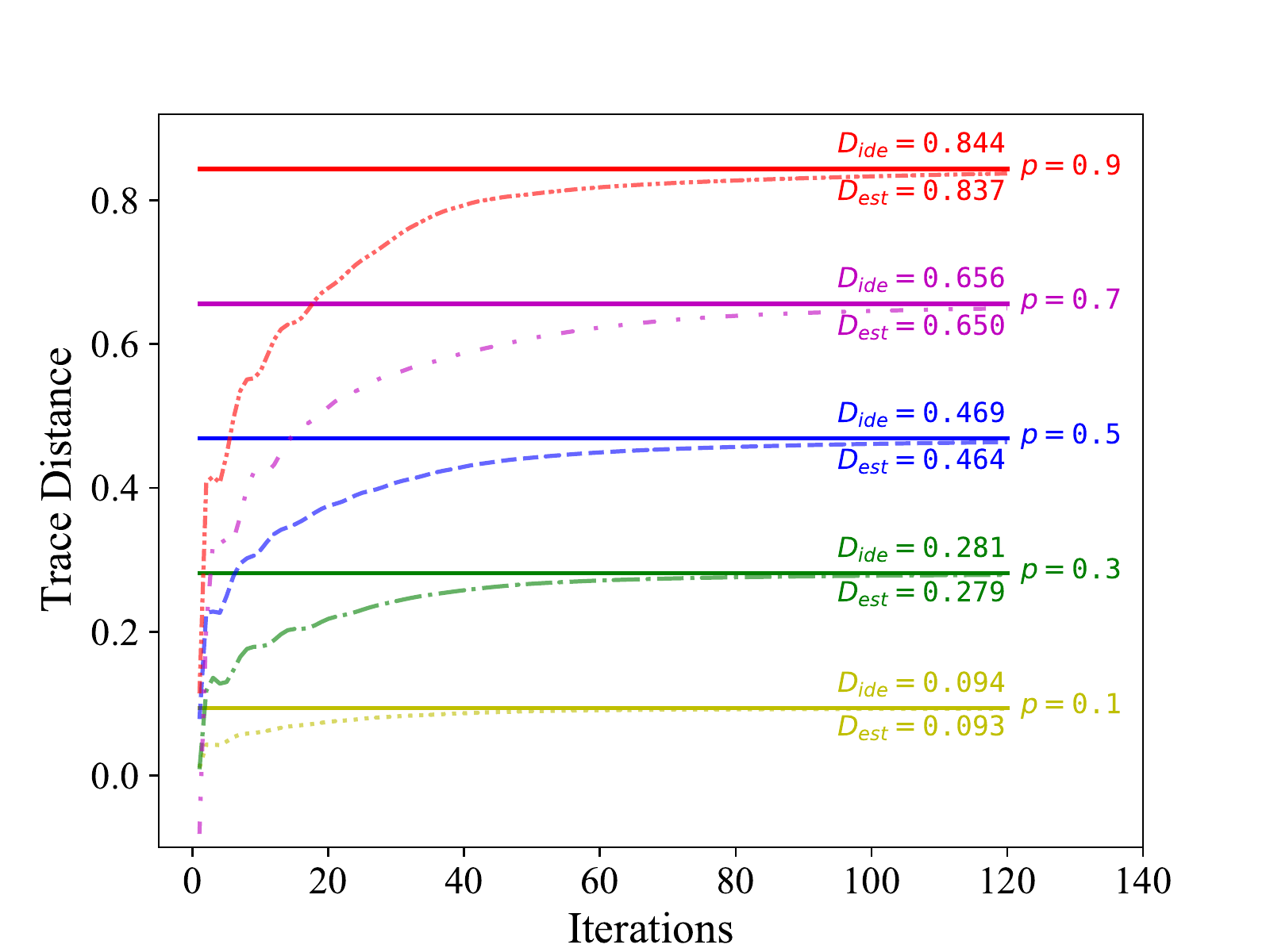}
    \caption{Learning processes of VTDE, where the input states $\rho$ and $\sigma$ are 4-qubit GHZ state and its output after the depolarizing channel. The different colors correspond to considered depolarizing channel parameters $p=\{0.1,0.3,0.5,0.7,0.9\}$, with $p=0.9$ being the top two red lines. We denote the ideal trace distances by $D_{ide}$ and solid lines, and denote the estimated ones by $D_{est}$ and dashed/dotted lines.}
    \label{fig:iteration}
\end{figure}

Next, we explore the required number of layers of the ansatz for input states with distinct ranks. Specifically, we set the number of qubits $n=3$ and Rank($\sigma$) = 2, while Rank($\rho$) ranges from 1 to $2^n=8$. We randomly sample 100 states for each number of ranks of $\sigma$, and compute the accuracy for circuits with 1, 2 and 4 layers. Here the accuracy is defined as $\text{Acc.}=D_{est}/D_{ide}$, where $D_{est,ide}$ are the estimated and ideal trace distances, respectively. Hyper-parameters are the same as the previous experiment. Fig. \ref{fig:rank-layer} summarizes the result, telling that the behavior of a 4-layer circuit is more accurate and stable, while circuits with fewer layers perform badly and unstably for any rank of $\rho$, possibly due to the lack in expressibility \cite{sim2019expressibility}.

\begin{figure}[h]
    \centering
    \includegraphics[width=0.42\textwidth]{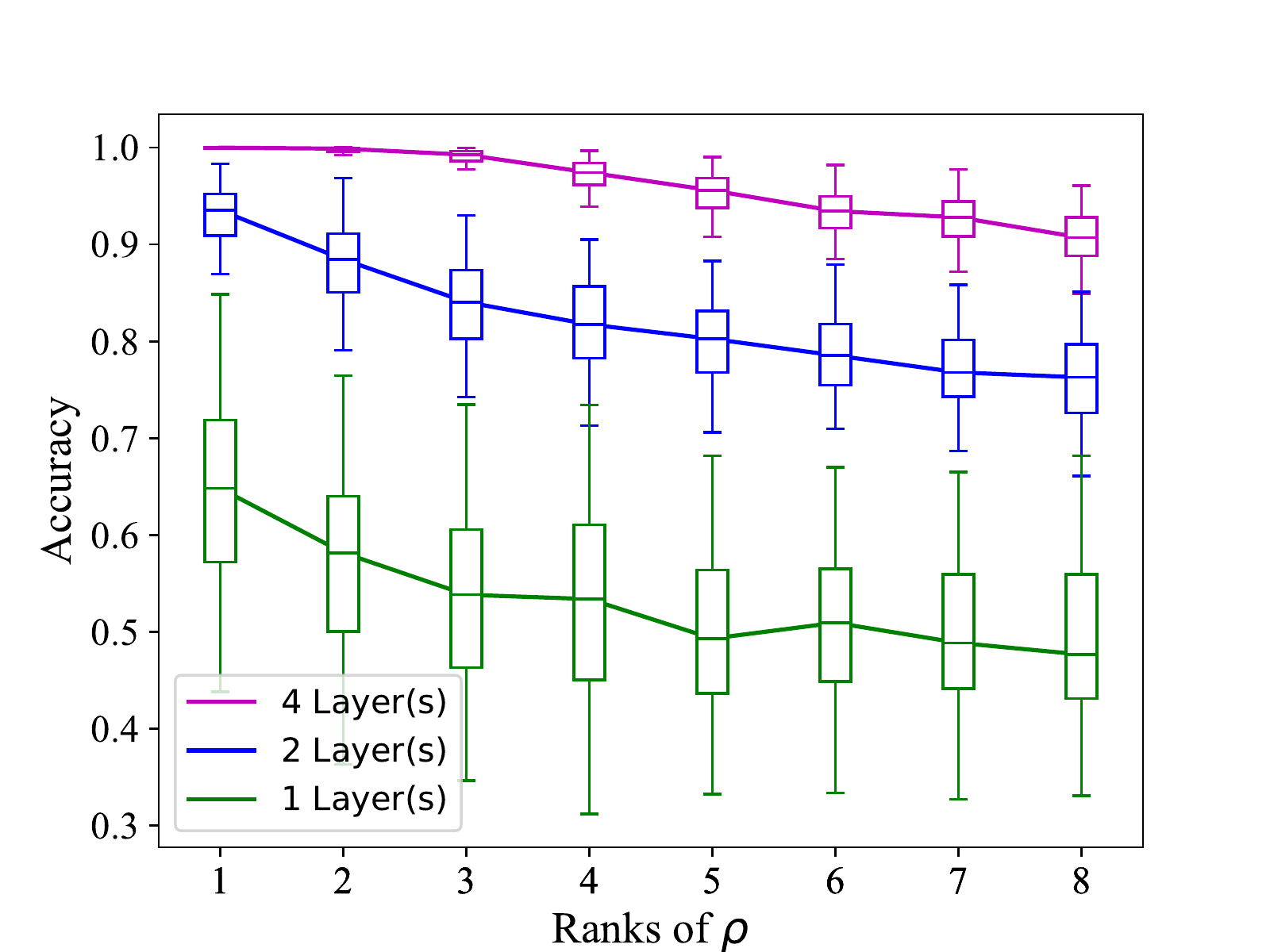}
    \caption{Trace distance learned by VTDE with different number of layers versus the rank of $\rho$. The curves connect the median values in the boxplots displaying the distribution of the estimation of $D$ between 100 sampled state pairs. Each curve corresponds to the 1-, 2-, or 4-layer case.}
    \label{fig:rank-layer}
\end{figure}

\subsection{Implementation on superconducting quantum processor}\label{subsec:VTDE-IBM}

We also apply our VTDE to estimate the trace distance between $\ket{+}$ state and itself affected by a dephasing channel:
\begin{align}
    \text{Deph}_p(\rho)=pZ\rho Z+(1-p)\rho.
\end{align}
The result with comparison to the values achieved from simulation are listed in Table \ref{table:VTDE-IBM}. The experiments are performed on the \href{https://quantum-computing.ibm.com/}{\textit{IBM quantum}} platform, loading the quantum device $ibmq\_quito$ containing 5 qubits.

\begin{table}[h]
\small
\centering
\begin{tabular}{|c ||c| c|c|}
 \hline
 Backends & mean & variance & error rate\\ [0.5ex] 
 \hline\hline
 Theoretical value & $0.7$ & - & - \\ 
 Simulator & $0.70322$ & $0.00010$ & 0.46\%\\
 $ibmq\_quito$ & $0.64312$ & $0.00025$ & 8.13\%\\
 \hline
 \end{tabular}
\caption{Error analysis for VTDE between $\rho=\ket{+}$ and $\sigma=\text{Deph}_p(\rho)$. We choose $p=0.7$ and repeat the experiment 10 times independently.}
\label{table:VTDE-IBM}
\end{table} 

The input state we choose here are $\rho=\ket{+}$ and $\sigma=\text{Deph}_{0.7}(\rho)$. On the quantum device, the input state $\sigma$ is prepared by applying two $Ry$ gates on the working and ancilla qubits with parameters $\pi/2$ and $2\arcsin{\sqrt{0.7}}\approx1.982$ respectively, followed by a Control-$Z$ gate, then discarding the ancilla. Next, we operate parameterized quantum circuit on qubits, and use sequential minimal optimization \cite{Nakanishi2020Sequential} to optimize the parameters until the cost converge to its maximum. We repeat 10 independent experiments with same input states and randomly initialized parameters.

As demonstrated in Table \ref{table:VTDE-IBM}, the estimates converges stably for the simulator and the quantum device. Correspondingly, the achieved trace distance estimates by simulator (0.7032) is closer to the theoretical values (0.7) than that of quantum device (0.64312), which may caused by the quantum device noises.

\section{Variational fidelity estimation}\label{sec:fidelity}
In this section, we introduce Variational Fidelity Estimation (VFE) as a hybrid quantum-classical algorithm for estimating fidelity in the most general case where two mixed states are provided. The intuition of VFE lies in Uhlmann's theorem and the freedom in purification, based on which we prove that optimization over unitaries can obtain the fidelity. The purification of each quantum state is required by the optimization, for which we design a variational quantum state learning (VQSL) algorithm \update{as a subroutine}. Both VFE and VQSL employs PQC to implement the optimization over unitaries. We give the theory behind and the process of VFE and VQSL, respectively, \update{followed by experimental results from a simulator and a 5-qubit IBMQ superconducting processor.} The \update{workflow} of VFE is shown in Fig. \ref{fig:VFE-diagram}.

%%% ======================================================
\subsection{Theory and algorithm for fidelity estimation}

\begin{figure}[ht]
	\centering
		\includegraphics[width=0.48\textwidth]{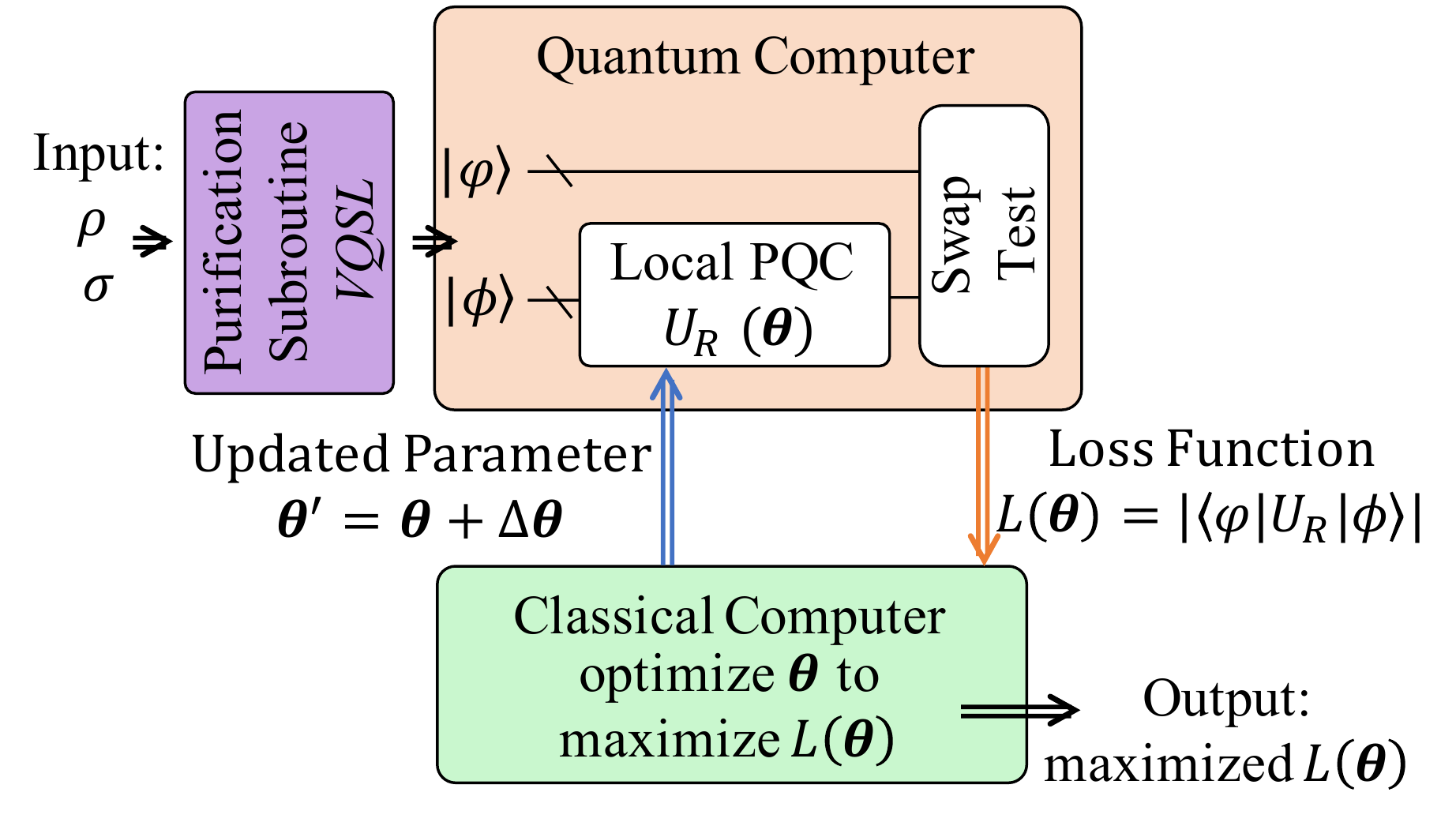}
	\caption{\update{Workflow} of VFE. VFE employs a purification subroutine to purify $\rho$ and $\sigma$. The outputs $\bra{\psi}U_R\ket{\phi}$ are obtained by the swap test. Then the optimization of the measurement outcome is undertaken by a classical computer (the green box). The unitary evolution on the ancillary system is \update{realized} by a hardware efficient ansatz.}
	\label{fig:VFE-diagram}
\end{figure}

Suppose we have two unknown quantum states, $\rho_A$ and $\sigma_A$, on system $A$. With arbitrary purification of $\rho_A$ and $\sigma_A$ (denoted by $\ket{\psi}_{AR}$ and $\ket{\phi}_{AR}$, respectively), Corollary \ref{cor:fid} allows us to estimate the fidelity of $\rho_A$ and $\sigma_A$. As the prerequisite of Corollary \ref{cor:fid}, we first introduce Uhlmann's theorem and the freedom in purification:

\begin{theorem}[Uhlmann's theorem, \cite{Nielsen2010}]
Suppose $\rho$ and $\sigma$ are states of quantum system $A$. Then, for given quantum purification $\ket\psi_{AR}$ of $\rho_A$ on system $AR$, 
\begin{align}
F(\rho_A,\sigma_A) = \max_{\ket{\phi}_{AR}} |\braket{\psi}{\phi}_{AR}|, 
\end{align}
where the maximization is over all purifications $\ket\phi_{AR}$ of $\sigma_{A}$ on system $AR$.
\label{thm:uhl}
\end{theorem}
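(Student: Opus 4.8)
The plan is to prove both inequalities at once by reducing the overlap $|\braket{\psi}{\phi}_{AR}|$ to a single trace expression and then invoking the variational characterization of the trace norm. The key device is the operator--vector correspondence: fixing an orthonormal basis $\{\ket{i}\}$ and writing $\ket{\Gamma}_{AR}=\sum_i\ket{i}_A\ket{i}_R$ for the (unnormalized) maximally entangled vector, every bipartite pure state can be written as $(X\otimes I_R)\ket{\Gamma}_{AR}$ for an operator $X$ on $\mathcal{H}_A$, and a direct computation gives $\tr_R (X\otimes I)\proj{\Gamma}(X\otimes I)^\dagger = XX^\dagger$. Thus the condition ``$\ket{\psi}_{AR}$ purifies $\rho_A$'' becomes $XX^\dagger=\rho_A$, which by the polar decomposition means $X=\sqrt{\rho_A}\,V$ for a unitary $V$. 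Likewise every purification $\ket{\phi}_{AR}$ of $\sigma_A$ takes the form $(\sqrt{\sigma_A}\,W\otimes I_R)\ket{\Gamma}_{AR}$, and as $\ket{\phi}_{AR}$ ranges over all purifications of $\sigma_A$ the unitary $W$ ranges over all unitaries on $\mathcal{H}_R$. This is exactly the ``freedom in purification.''

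First I would substitute these forms into the overlap. Using the elementary identity $\bra{\Gamma}(M\otimes I)\ket{\Gamma}=\tr M$, I obtain $\braket{\psi}{\phi}_{AR}=\tr\big(V^\dagger\sqrt{\rho_A}\sqrt{\sigma_A}\,W\big)$, so that $\max_{\ket{\phi}_{AR}}|\braket{\psi}{\phi}_{AR}|=\max_{W}\big|\tr(V^\dagger\sqrt{\rho_A}\sqrt{\sigma_A}\,W)\big|$, the maximum being over all unitaries $W$.

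The remaining ingredient is the lemma $\max_{U}|\tr(AU)|=\|A\|_1$ over unitaries $U$, applied to $A=V^\dagger\sqrt{\rho_A}\sqrt{\sigma_A}$. I would establish it from the singular value decomposition $A=\sum_k s_k\ketbra{u_k}{v_k}$: then $\tr(AU)=\sum_k s_k\bra{v_k}U\ket{u_k}$, and since $|\bra{v_k}U\ket{u_k}|\le 1$ we get $|\tr(AU)|\le\sum_k s_k=\|A\|_1$, with equality attained by the unitary sending each $\ket{v_k}\mapsto\ket{u_k}$. Combining this with the unitary invariance of the trace norm, $\|V^\dagger\sqrt{\rho_A}\sqrt{\sigma_A}\|_1=\|\sqrt{\rho_A}\sqrt{\sigma_A}\|_1=F(\rho_A,\sigma_A)$, finishes the argument.

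The main obstacle is the bookkeeping around the purifying dimension and the precise scope of ``freedom in purification.'' The clean operator--vector parametrization above presumes $\dim\mathcal{H}_R=\dim\mathcal{H}_A$, so that $V$ and $W$ are genuine unitaries; I would handle the general case $\dim\mathcal{H}_R\ge\operatorname{rank}\rho_A,\operatorname{rank}\sigma_A$ either by padding the spectra with zeros and embedding, or by replacing unitaries with isometries and checking that the trace-norm identity is unaffected. I would also need to confirm that the polar factor $V$ can be taken unitary even when $\rho_A$ is singular (by any unitary completion on $\ker\rho_A$) and that this choice does not shrink the attainable range of $\braket{\psi}{\phi}_{AR}$. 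Once these dimension and degeneracy points are dispatched, the two displayed computations together with the trace-norm lemma yield the result immediately.
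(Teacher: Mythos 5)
The paper does not prove this statement; Theorem~\ref{thm:uhl} is quoted from Nielsen and Chuang with a citation, so there is no in-paper argument to compare against. Your proof is correct and is essentially the standard textbook derivation that the citation points to: parametrize purifications via the operator--vector correspondence $(X\otimes I_R)\ket{\Gamma}_{AR}$ with $XX^{\dagger}=\rho_A$, use the polar decomposition to write $X=\sqrt{\rho_A}\,V$, reduce the overlap to $\tr\bigl(V^{\dagger}\sqrt{\rho_A}\sqrt{\sigma_A}\,W\bigr)$, and apply the variational formula $\max_{U}|\tr(AU)|=\|A\|_1$ together with unitary invariance to land on $\|\sqrt{\rho_A}\sqrt{\sigma_A}\|_1$, which is exactly the paper's definition of $F(\rho_A,\sigma_A)$. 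The dimension and degeneracy caveats you flag (purifying system at least as large as the ranks, unitary completion of the polar factor on $\ker\rho_A$, isometries in place of unitaries when $d_R\neq d_A$) are the right ones and are routine to dispatch. One cosmetic point: in the parametrization $(\sqrt{\sigma_A}\,W\otimes I_R)\ket{\Gamma}$ the operator $W$ acts on $\mathcal{H}_A$, and the ``freedom in purification'' unitary on $\mathcal{H}_R$ is its transpose via $(M\otimes I)\ket{\Gamma}=(I\otimes M^{T})\ket{\Gamma}$; this does not affect the range of the maximization, so the argument stands.
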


\begin{lemma}[Freedom in purification, \cite{Nielsen2010}]
For two purifications $\ket{\phi_1}_{AR},\ket{\phi_2}_{AR}$ of $\sigma_A$ on system $AR$, there exists a unitary transformation $U_R$ such that
\begin{align}
\ket{\phi_2}_{AR} = (I_A\otimes U_R)\ket{\phi_1}_{AR},
\end{align}
\label{lem:fre}
\end{lemma}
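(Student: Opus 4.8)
The plan is to reduce everything to the structure forced by fixing a single eigenbasis of $\sigma_A$ and expanding both purifications in it, which is just the Schmidt-decomposition argument organized so that spectral degeneracy causes no trouble. Write the spectral decomposition $\sigma_A=\sum_i p_i\proj{i}_A$ with $\{\ket{i}_A\}$ a fixed orthonormal eigenbasis of $\sigma_A$. Any vector on $AR$ expands uniquely as $\ket{\phi}_{AR}=\sum_i\ket{i}_A\ox\ket{a_i}_R$ for some (not necessarily normalized) vectors $\ket{a_i}_R\in\cH_R$, simply by grouping terms according to the fixed $A$-basis.

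First I would impose the purification constraint $\tr_R\proj{\phi}_{AR}=\sigma_A$. Writing $\proj{\phi}_{AR}=\sum_{i,j}\ketbra{i}{j}_A\ox\ketbra{a_i}{a_j}_R$ and tracing out $R$ gives $\sum_{i,j}\braket{a_j}{a_i}_R\,\ketbra{i}{j}_A$, and matching this against $\sum_i p_i\proj{i}_A$ forces $\braket{a_j}{a_i}_R=p_i\,\delta_{ij}$. Hence the $\ket{a_i}_R$ are mutually orthogonal with $\|\ket{a_i}_R\|^2=p_i$, so for every index with $p_i>0$ we may write $\ket{a_i}_R=\sqrt{p_i}\,\ket{e_i}_R$ with $\{\ket{e_i}_R\}$ orthonormal. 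Applying the identical reasoning to both purifications yields $\ket{\phi_1}_{AR}=\sum_{i:p_i>0}\sqrt{p_i}\,\ket{i}_A\ket{e_i}_R$ and $\ket{\phi_2}_{AR}=\sum_{i:p_i>0}\sqrt{p_i}\,\ket{i}_A\ket{f_i}_R$, where both $\{\ket{e_i}_R\}$ and $\{\ket{f_i}_R\}$ are orthonormal systems of the same cardinality in $\cH_R$ (the common cardinality being the number of nonzero $p_i$, which is fixed by $\sigma_A$).

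Then I would define $U_R$ on $\mathrm{span}\{\ket{e_i}_R\}$ by $U_R\ket{e_i}_R=\ket{f_i}_R$. Since both families are orthonormal with equal cardinality, this is an isometry between two subspaces of $\cH_R$ whose orthogonal complements have equal dimension, so it extends to a genuine unitary on all of $\cH_R$. A one-line check, $(I_A\ox U_R)\ket{\phi_1}_{AR}=\sum_{i:p_i>0}\sqrt{p_i}\,\ket{i}_A\,U_R\ket{e_i}_R=\sum_{i:p_i>0}\sqrt{p_i}\,\ket{i}_A\ket{f_i}_R=\ket{\phi_2}_{AR}$, then completes the argument.

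The main subtlety, and the step I would be most careful about, is degeneracy in the spectrum of $\sigma_A$. Committing to one fixed eigenbasis $\{\ket{i}_A\}$ up front sidesteps the non-uniqueness of the Schmidt basis on the $A$ side: even though each purification might naturally admit several Schmidt decompositions within a degenerate eigenspace, forcing the common basis makes the two expansions directly comparable, at the (harmless) cost of letting the $\ket{e_i}_R$ and $\ket{f_i}_R$ be merely orthonormal rather than uniquely pinned down. The only remaining thing to verify is that the partial isometry always completes to a unitary, which holds precisely because $\dim\cH_R$ is large enough to support a purification and because both orthonormal families omit the $p_i=0$ directions in the same way.
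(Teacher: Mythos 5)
Your proof is correct and is the standard Schmidt-decomposition argument: expand both purifications in a fixed eigenbasis of $\sigma_A$, read off that the $R$-side vectors are orthogonal with norms $\sqrt{p_i}$, and extend the resulting partial isometry $\ket{e_i}_R\mapsto\ket{f_i}_R$ to a unitary. The paper does not prove this lemma itself --- it cites Nielsen and Chuang --- and your argument is essentially the one found there, with the degeneracy issue handled cleanly by committing to a single eigenbasis up front.
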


Based on the fundamental properties of quantum fidelity discussed above, we observe that the following optimization problem characterizes the fidelity function.
\begin{corollary}
For any quantum states $\rho_A$ and $\sigma_A$ on system $A$, and arbitrary purification $\ket{\psi}_{AR}$ and $\ket{\phi}_{AR}$ of $\rho_A$ and $\sigma_A$, it holds that
\begin{align}
    F(\rho,\sigma) = \max_{U_R} |\bra\psi_{AR} (I_A\otimes U_R)\ket\phi_{AR}|.
\end{align}
where the optimization is over any unitaries on system $R$.
\label{cor:fid}
\end{corollary}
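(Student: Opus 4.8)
The plan is to combine Uhlmann's theorem (Theorem~\ref{thm:uhl}) with the freedom in purification (Lemma~\ref{lem:fre}) by identifying the set of \emph{all} purifications of $\sigma_A$ on $AR$ with the orbit of the single fixed purification $\ket\phi_{AR}$ under local unitaries acting only on $R$. Once this identification is in hand, the corollary follows by a direct substitution into the maximization appearing in Uhlmann's theorem.

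First I would fix the given reference purification $\ket\phi_{AR}$ of $\sigma_A$ and establish that the family $\{(I_A\otimes U_R)\ket\phi_{AR}\}_{U_R}$ coincides with the set of all purifications of $\sigma_A$ on $AR$. The forward inclusion---that each $(I_A\otimes U_R)\ket\phi_{AR}$ is again a purification of $\sigma_A$---follows from a one-line reduced-state computation: since $U_R$ acts trivially on $A$, we have $\tr_R\!\big[(I_A\otimes U_R)\proj{\phi}_{AR}(I_A\otimes U_R^\dagger)\big]=\tr_R \proj{\phi}_{AR}=\sigma_A$, using the cyclicity of the partial trace over $R$. The reverse inclusion---that \emph{every} purification of $\sigma_A$ on $AR$ arises in this form---is precisely the statement of Lemma~\ref{lem:fre}, applied to the reference purification $\ket\phi_{AR}$ and an arbitrary competitor purification on the same system $AR$.

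Next I would invoke Theorem~\ref{thm:uhl} with the given $\ket\psi_{AR}$ playing the role of the fixed purification of $\rho_A$, so that $F(\rho,\sigma)=\max_{\ket{\phi'}_{AR}}|\braket{\psi}{\phi'}_{AR}|$, where the maximum ranges over all purifications $\ket{\phi'}_{AR}$ of $\sigma_A$ on $AR$. Substituting the parameterization $\ket{\phi'}_{AR}=(I_A\otimes U_R)\ket\phi_{AR}$ established in the previous step turns this maximization over purifications into a maximization over unitaries $U_R$ on $R$, yielding $F(\rho,\sigma)=\max_{U_R}|\bra\psi_{AR}(I_A\otimes U_R)\ket\phi_{AR}|$, which is exactly the claimed identity.

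The main point requiring care is that the two set-inclusions must match up on a common reference system: Lemma~\ref{lem:fre} connects two purifications only when they live on reference spaces of equal dimension. Here this is automatic, since both $\ket\psi_{AR}$ and $\ket\phi_{AR}$ are \emph{given} as purifications on the same space $AR$, so the dimension of $R$ is at least the rank of each state and the lemma applies directly with both purifications on this shared $R$. No further subtlety arises; the equivalence of the two optimization domains is the crux, and once it is established the corollary is immediate.
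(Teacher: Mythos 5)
Your proposal is correct and follows exactly the route the paper intends: the corollary is presented as an immediate consequence of Uhlmann's theorem (Theorem~\ref{thm:uhl}) and the freedom in purification (Lemma~\ref{lem:fre}), with the key observation being that the local unitaries $I_A\otimes U_R$ sweep out precisely the set of purifications of $\sigma_A$ on $AR$. Your write-up is, if anything, more careful than the paper, which does not spell out the two set-inclusions explicitly.
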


Corollary \ref{cor:fid} gives us an elegant variational representation of the fidelity function that only requires purification of input states and optimization over the ancillary system. Following this line of reasoning, we design a variational quantum algorithm to estimate the fidelity of quantum states $\rho$ and $\sigma$. We note that in Algorithm \ref{alg:vfe} the procedure of purifying $\rho$ and $\sigma$ is realized by VQSL as a subroutine, which will be presented in Sec.~\ref{sec:sub}.
Our method formulates the problem of directly calculating the fidelity between two mixed states into an optimization procedure over the ancillary system of two purified states. At the cost of extra subroutines, this approach could deal with arbitrary high-rank states as long as we provide enough ancillary qubits. For numerical performance analysis, we refer to Sec.~\ref{sec:vfe-exp} and Sec.~\ref{sec:vfe-ibmq}. Besides, we also derive the analytical gradient of this loss function in Appendix \ref{proof:gradient_VFE}, \update{which is required for gradient-based classical optimization.}

\vspace{0cm}
\begin{algorithm}[t] 
\caption{Variational Fidelity Estimation (VFE)}
\begin{algorithmic} \label{alg:vfe}
\REQUIRE quantum states $\rho_A$ and $\sigma_A$, circuit ansatz of unitary $U_{R}(\bm\theta)$, number of iterations ITR;
\ENSURE an estimate of fidelity $F(\rho_A,\sigma_A)$.

\STATE Use Purification Subroutine to learn the 
purified state $\ket{\psi}_{AR}$ of $\rho_A$ and the 
purified state $\ket{\phi}_{AR}$ of $\sigma_A$.

Initialize parameters $\bm \theta$.
 
\FOR{itr $=1,\ldots,$ ITR}

\STATE Apply $U_R(\bm\theta)$ to $\ket{\phi}_{AR}$ and obtain the resulting state $\ket{\widetilde\phi}_{AR} = I_A \otimes U_R(\bm\theta)\ket{\phi}_{AR}$.
 
\STATE Compute the loss function $\cL_2(\bm\theta):=|\braket{\widetilde\phi}{\psi}|_{AR}$.
 
\STATE Maximize the loss function and update parameters $\bm\theta$;

\ENDFOR
 
%\STATE Repeat step 3-5 until the loss function $\cL_2(\bm\theta)$ converges with tolerance $\varepsilon$, which means $|\Delta \cL_2| < \varepsilon$;
\STATE Output the optimized  $\cL_2$ as the final fidelity estimation;
\end{algorithmic}
\end{algorithm}

\subsection{Subroutine - quantum state learning and purification}
\label{sec:sub}
The task of quantum state learning is to find the correct unitary operation such that one could prepare any target mixed state $\rho$ from an initialized state (usually $|0\rangle\langle0|$ on each qubit). In Ref.~\cite{Lee2018a}, the method of state learning for a pure state $\rho$, where $\text{Rank}(\rho)=1$, is proposed. Our approach further generalizes it to work for a mixed state $\rho$, where $\text{Rank}(\rho)\geq 1$. In order to implement our algorithm in NISQ devices, we consider quantum state learning via PQC and the framework of variational quantum algorithm (VQA). The main issue is how to design a faithful loss function $\cL_3$ to guide the learning direction. Such a loss function should be able to quantify the closeness between target $\rho$ and prepared state $\chi$. For pure state cases, simply maximize the state overlap $\tr{\rho \chi}$ could help us learn the target state. But for general mixed states, this is not the case. Consider a counter example of $\rho = I/2$ and $\chi$ being a random one-qubit mixed state. The overlap is always $\tr{\rho \chi} = 1/2$ but this is clearly not the correct distance measure between $\rho$ and $\chi$. On the other hand, the Hilbert-Schmidt norm defined as follows could be a good candidate:
\begin{equation}
\Delta(\rho, \chi):= \|\rho-\chi\|_2^2 =\tr(\rho-\chi)^2.
\label{eq:HS-norm}
\end{equation}
As the unknown state $\rho$ is fixed, we have

\begin{align}
&\argmin_{\chi} \Delta(\rho,\chi) \notag \\
=& \argmin_{\chi} \tr \rho^2 + \tr\chi^2 - 2\tr\rho\chi  \notag \\
=& \argmin_{\chi} \tr\chi^2 - 2\tr\rho\chi,
\end{align} 
where $0<\tr\rho^2<1$ for mixed state $\rho$ is a constant and hence doesn't influence the optimization direction. 
% With the above expression, it is clear that the task of minimizing $\Delta(\rho,\chi)$ is equivalent to minimize $\tr\chi^2 - 2\tr\rho\chi$. 
Therefore, we choose our loss function to be
\begin{align}
\min_{\bm\t} \cL_3(\bm\theta):=\tr\chi^2 - 2\tr\rho\chi.
\label{eq:purification-loss}
\end{align}

Note that this loss function can be implemented on near-term devices since the state overlap can be computed via the Swap test~\cite{Buhrman2001,Gottesman2001a}. As a brief reminder, the Swap test evaluates the overlap of two arbitrary states by a single qubit measurement after a combination of Hadamard gates and controlled-swap gates. Evidence has been found that the Swap test has a simple physical implementation in quantum optics~\cite{Ekert2002,Garcia-Escartin2013} and can be experimentally implemented on near-term quantum hardware~\cite{Islam2015,Patel2016,Linke2018}.

After discussing the general picture of state learning, we introduce its application in learning the purification of quantum state $\rho_A$ on system $A$. This can be done by providing an ancillary qubit system $R$ with dimension $d_R$ and initializing the complete system with $\ket{\chi_0} =\ket{00}_{AR}$. Then we apply a parametrized unitary operation $U_{AR}(\bm \theta)$ to drive the system and use classical optimization methods to minimize Eq.\eqref{eq:purification-loss}. In the case of learning purification, we need to set
\begin{align}
\update{\chi_A = \tr_R \big[U_{AR}(\bm\theta)\proj{\chi_0}U_{AR}^{\dagger} (\bm\theta)\big],}
\end{align}
where the symbol $\tr_R$ denotes the partial trace operation with respect to the ancillary system $R$. With the above set up, we introduce a variational quantum algorithm to learn the purification of a quantum state $\rho_A$ as follows.

\begin{algorithm}[H] 
\caption{Variational Quantum State Learning (VQSL)}
\begin{algorithmic} \label{alg:vqsl}
\REQUIRE quantum states $\rho_A$, circuit ansatz of unitary $U_{AR}(\bm\theta)$, number of iterations ITR;
\ENSURE a purification of $\rho_A$.

\STATE Initialize parameters $\bm\theta$.

\FOR{itr $=1,\ldots,$ ITR}

\STATE Apply $U_{AR}(\bm\theta)$ to three equivalent initial states $\ket{\chi_0}$ on system $AR$ and obtain the resulting partial states: $\chi^1_{A} =\chi^2_{A}=\chi^3_{A} = \tr_R \big[U_{AR}(\bm\theta)\proj{\chi_0}U_{AR}^{\dagger}(\bm\theta) \big]$;  
 
\STATE Measure the overlap $\avg{O}_1 = \tr (\chi^1_{A}\chi^2_{A})$ via Swap Test;

\STATE Measure the overlap $\avg{O}_2=\tr (\rho_{A} \chi^3_{A})$ via Swap Test;  
\STATE Compute the loss function $\cL_3(\bm\theta) =  \avg{O}_1 -2\avg{O}_2 $;

\STATE Perform optimization for $\cL_3(\bm\theta)$ and update parameters $\bm\theta$;

\ENDFOR
% \STATE Repeat 2-6 until the loss function $\cL_3(\bm\theta)$ converges, which means $|\Delta \cL_3| < \varepsilon$, and record the optimal parameter $\bm\theta^*$;
\STATE Output the final purified state $\ket{\psi}_{AR} = U_{AR}(\bm\theta^*)\ket{00}_{AR}$;
\end{algorithmic}
\end{algorithm}

We want to emphasize that the scope of state learning is much broader than the purification learning task. One could further develop this approach for quantum state preparation and many other applications. For our purpose, learning a purified state to estimate the fidelity is enough. At the cost of introducing ancillary qubits, one could prepare a purification $\ket{\psi}_{AR}$ of target mixed state $\rho_A$ with high fidelity. Then, it will be important to study the performance of VQSL with different ancillary dimensions $d_R$. The following analysis suggests that only a few ancillary qubits are necessary for low-rank states. For best performance, one should choose the dimension of the ancillary system to be the same as the original system $A$ such that $d_R = d_A$.

\begin{proposition}
Suppose the input state $\rho_{A}$ has the spectral decomposition $\rho_{A} = \sum_{j=1}^k \lambda_j\proj{\psi_j}$ with decreasing spectrum $\{\lambda_j\}_{j=1}^k$. \update{There exists a quantum circuit $U_{AR}$ that generates $\chi_A$ from $\ket{00}_{AR}$ to approximate the target state $\rho_A$} with a fidelity that satisfies

\begin{equation}
  F(\rho_A,\chi_A) \!=\! \left\{ \!\begin{array}{ll} 1 & \text{if $k \le d_R $}, \\ 
  \sqrt{\sum_{j=1}^{d_R}\lambda_j} & \text{otherwise,} \end{array} \right. 
\label{eq:vqsl_bound}  
\end{equation}
where the second case is at least  $\sqrt{{d_R}/{k}}$.
\end{proposition}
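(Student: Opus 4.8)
The plan is to split along the rank constraint that tracing out a $d_R$-dimensional ancilla forces on $\chi_A$, and in each regime to both exhibit an explicit purification (for achievability) and bound the fidelity from above (for optimality). The structural fact driving everything is that $\chi_A = \tr_R[U_{AR}\proj{00}_{AR}U_{AR}^\dagger]$ is the reduced state of a pure state on $AR$, so by the Schmidt decomposition $\rank(\chi_A)\le d_R$; the two cases of Eq.~\eqref{eq:vqsl_bound} are then governed by whether $\rho_A$'s rank $k$ fits inside this budget.

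For the case $k\le d_R$, I would write down the canonical purification
\[
\ket{\psi}_{AR}=\sum_{j=1}^{k}\sqrt{\lambda_j}\,\ket{\psi_j}_A\ket{j}_R,
\]
which is well defined precisely because $d_R\ge k$ supplies enough orthonormal ancilla states $\ket{j}_R$. Tracing out $R$ returns exactly $\rho_A$, and since every pure state on $AR$ is related to $\ket{00}_{AR}$ by some unitary, there is a circuit $U_{AR}$ realizing it, giving $\chi_A=\rho_A$ and $F=1$. For the case $k>d_R$, achievability comes from the truncated-and-renormalized purification supported on the top $d_R$ eigenvectors, whose reduced state is $\chi_A=(\sum_{j=1}^{d_R}\lambda_j)^{-1}\sum_{j=1}^{d_R}\lambda_j\proj{\psi_j}$. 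Because $\rho_A$ and $\chi_A$ commute, $F(\rho_A,\chi_A)=\sum_j\sqrt{\lambda_j\mu_j}$ collapses in one line to $\sqrt{\sum_{j=1}^{d_R}\lambda_j}$, where $\mu_j=\lambda_j/\sum_{i=1}^{d_R}\lambda_i$.

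For optimality in the second case, I would show no rank-$\le d_R$ state can beat this value. Writing $\Pi_\chi$ for the projector onto $\supp(\chi)$ and using $\Pi_\chi\sqrt{\chi}=\sqrt{\chi}$ together with the operator Cauchy--Schwarz inequality $\|XY\|_1\le\|X\|_2\|Y\|_2$,
\[
F(\rho_A,\chi)=\|\sqrt{\rho_A}\sqrt{\chi}\|_1\le\|\sqrt{\rho_A}\Pi_\chi\|_2\,\|\sqrt{\chi}\|_2=\sqrt{\tr(\rho_A\Pi_\chi)},
\]
since $\|\sqrt{\chi}\|_2^2=\tr\chi=1$ and $\|\sqrt{\rho_A}\Pi_\chi\|_2^2=\tr(\Pi_\chi\rho_A\Pi_\chi)=\tr(\rho_A\Pi_\chi)$. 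As $\Pi_\chi$ has rank $\le d_R$, Ky Fan's maximum principle gives $\tr(\rho_A\Pi_\chi)\le\sum_{j=1}^{d_R}\lambda_j$, matching the construction. Finally, the bound $\sqrt{\sum_{j=1}^{d_R}\lambda_j}\ge\sqrt{d_R/k}$ follows because the spectrum is decreasing and normalized, so the average of the top $d_R$ eigenvalues is at least the global average $1/k$, whence $\sum_{j=1}^{d_R}\lambda_j\ge d_R/k$.

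I expect the genuine content to be the upper bound in the second case: recognizing that ``maximize $F$ over reduced states of a $d_R$-ancilla purification'' reduces to ``maximize $F$ over rank-$\le d_R$ density matrices,'' and then converting this into a Ky~Fan eigenvalue-sum maximization via the operator Cauchy--Schwarz step. The $k\le d_R$ case and the achievability construction are essentially bookkeeping, and the closing inequality is a short averaging argument.
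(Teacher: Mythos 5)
Your proposal is correct, and its achievability half coincides with the paper's proof: for $k\le d_R$ the paper writes down the same canonical purification $\sum_{j=1}^{k}\sqrt{\lambda_j}\ket{\psi_j}_A\ket{\varphi_j}_R$, and for $k>d_R$ it uses the same truncated, renormalized purification with weights $\xi_j=\lambda_j/\eta$, $\eta=\sum_{\ell=1}^{d_R}\lambda_\ell$, computing $F=\sqrt{\eta}$ exactly as you do. Where you go beyond the paper is the converse: the paper frames the problem as $\max_{U_{AR}}F(\rho_A,\chi_A)$ but never actually shows that no other choice of $U_{AR}$ does better, whereas you close that gap by reducing the maximization to rank-$\le d_R$ density matrices and then bounding $F(\rho_A,\chi)\le\sqrt{\tr(\rho_A\Pi_\chi)}\le\sqrt{\sum_{j=1}^{d_R}\lambda_j}$ via the Hilbert--Schmidt Cauchy--Schwarz inequality and the Ky Fan maximum principle. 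You also supply the one-line averaging argument for $\sqrt{\eta}\ge\sqrt{d_R/k}$, which the paper merely asserts. Since the proposition as stated is only an existence claim, your extra work is not strictly required, but it upgrades the construction from ``a'' circuit to ``the optimal'' circuit and would be a worthwhile addition; the paper's version buys brevity at the cost of leaving the implicit optimality claim unproved.
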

\begin{proof}
Assume $d_R\le d_A$. the goal is to solve 
\begin{align}
\max_{U_{AR}} F(\rho_A,\chi_A),
\end{align} 
\update{where $\chi_A = \tr_R \big[U_{AR}\proj{00}_{AR}U^\dagger_{AR}\big]$ is the reduced state on the subsystem $A$.} For convenience, denote the state $U_{AR}\ket{00}_{AR}$ by $\ket{\psi}$ and hence
\begin{align}
\chi_A = \tr_R \proj{\psi}_{AR}.
\end{align} 
Since the Schmidt rank of $\ket{\psi}$ is at most $d_R$, the rank of $\chi_A$ is also at most $d_R$. 

If $k\le d_R$, one could choose unitary $U_{AR}$ such that $\ket{\psi} = U_{AR}\ket{00}_{AR} = \sum_{j=1}^k \sqrt{\lambda_j}\ket{\psi_j}_A\ket{\varphi_j}_R$, where $\{\ket{\psi_j}_A\}$ and $\{\ket{\varphi_j}_R\}$ are orthonormal state sets. This ensures that $\chi_A = \tr_R \proj{\psi}= \sum_j \lambda_j\proj{\psi_j} = \rho_A$ and leads to a fidelity $F(\rho_A, \chi_A) = 1$.
On the other hand, if $k>d_R$, there exists an unitary $U$ such that $\ket{\psi} = U_{AR}\ket{00}_{AR} = \sum_{j=1}^{d_R} \sqrt{\xi_j}\ket{\psi_j}_A\ket{\varphi_j}_R$.
In particular, one could further choose $\xi_j = \frac{\lambda_j}{\eta}$ for $j=1,\cdots,d_R$, where the denominator is defined as $\eta = {\sum_{\ell=1}^{d_R}\lambda_\ell}$.
Then, one can calculate the fidelity $F$ as
\begin{align}
    F(\rho_A, \chi_A) & = \tr \sqrt{ \chi_A^{1/2}\rho_A \chi_A^{1/2}} \notag \\
    & = \sum_{j=1}^{d_R}\sqrt{\lambda_j^2/\eta} \notag \\
    & = \sum_{j=1}^{d_R}\lambda_j/\sqrt{\eta} \notag \\
    & = \sqrt{\eta}  \ge \sqrt{\frac{d_R}{k}}.
\end{align}
\end{proof}

\subsection{Numerical experiments and Trainability}
\label{sec:vfe-exp}

In this section, we conduct simulations to investigate the performance of VFE for state fidelity estimation and its subroutine VQSL for purification learning. The parametrized quantum circuit $U_{AR}(\bm \t)$ used for Algorithm \ref{alg:vqsl} VQSL and $U_{R}(\bm \t)$ for Algorithm \ref{alg:vfe} are recorded in Appendix \ref{cir:VFE}. \update{All simulations including optimization loops are implemented via Paddle Quantum~\cite{Paddlequantum} on the PaddlePaddle Deep Learning Platform~\cite{Paddle,Ma2019}.}

We firstly generate 10 random pairs of full-rank density matrices $\{\rho_A^{(1)},\sigma_A^{(1)}|\cdots |\rho_A^{(10)},\sigma_A^{(10)} \}$ for $n_A = \{1,2,3\}$ \update{qubits} and calculate the deviation $\Delta F$ between the target fidelity and estimated fidelity. We choose $d_R = d_A$ for best performance in purification learning. The results are summarized in Table.~\ref{table:fid-random}. The average error rate can be reduced to the level of $<0.5\%$. \update{The maximum average error $\mathbb{E}_{max}[\Delta F] \approx 0.2476\%$ happens at $n_A = 2$, and no clear scaling phenomenon of average error is observed.}

\begin{table}[h]
\small
\centering
\begin{tabular}{|c ||c| c| c|}
 \hline
 Qubit number $\#$ & $n_A = 1$ & $n_A = 2$ & $n_A = 3$ \\ [0.5ex] 
 \hline\hline
 Average error rate $\mathbb{E}[\Delta F]$ & \quad $0.1694\%$ \quad&\quad $0.2476\%$ \quad&\quad $0.1960\%$ \quad\\ 
 Standard deviation $\sigma[\Delta F]$ & \quad $0.1594\%$ \quad&\quad $0.1457\%$ \quad&\quad $0.1388\%$ \quad\\
 \hline
 \end{tabular}
\caption{Error analysis for fidelity estimation VFE between randomly generated full-rank density operators. For purification learning, the Adam optimizer~\cite{kingma2014adam} is adopted and hyper-parameters are taken to be depth $L=6$, learning rate LR $= 0.2$, and iteration loops  ITR $= 100$ (the latter two \update{hyper-parameters} are same for VFE).}
\label{table:fid-random}
\end{table} 

\begin{figure}[h]
    \centering
    \includegraphics[width=0.5\textwidth]{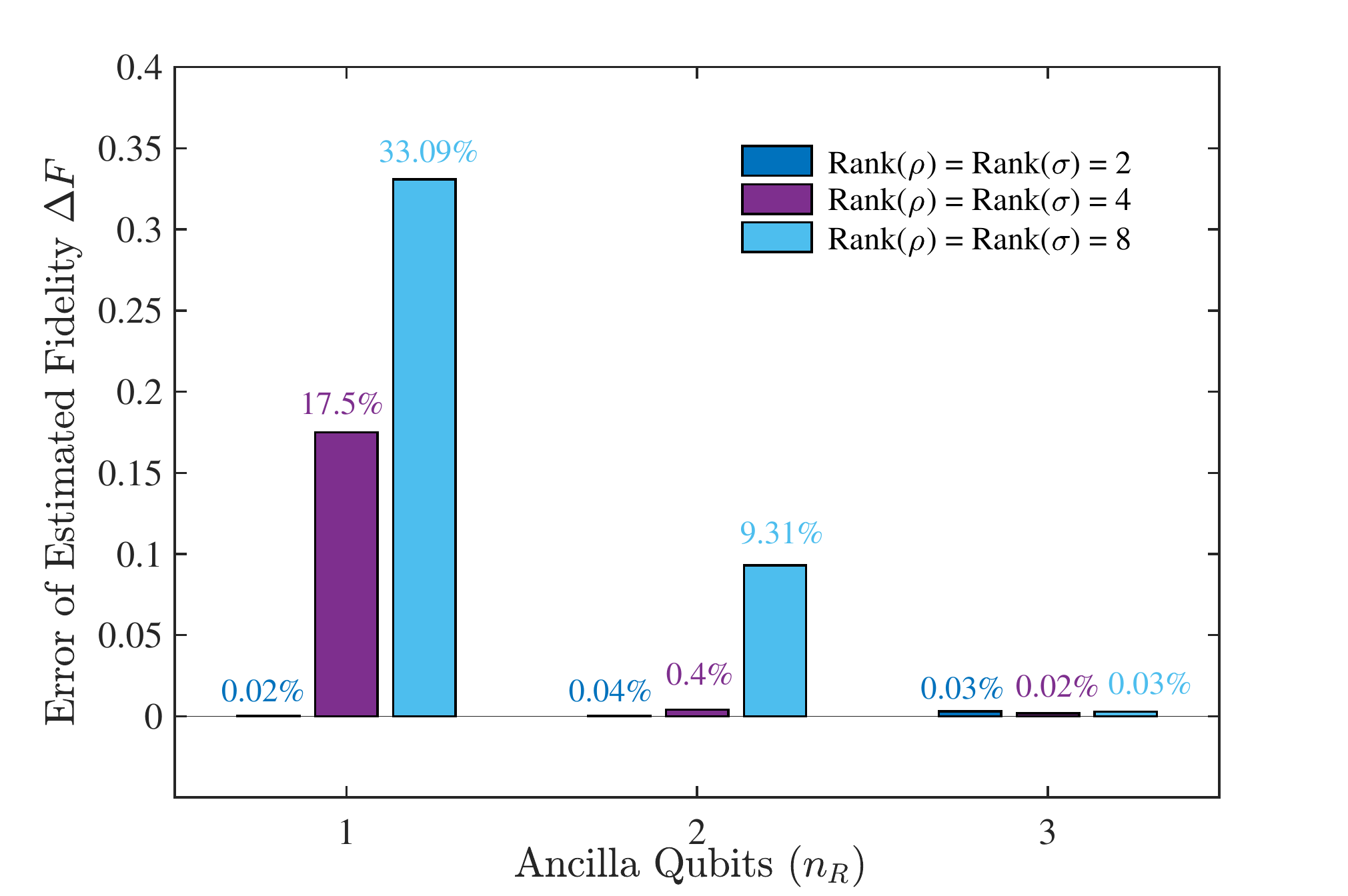}
    \caption{\update{Performance of VFE given two randomly generated 3-qubit mixed states $\rho$, $\sigma$ with different number of ancilla qubits $n_R$. 
    %For the full rank $\text{Rank}(\rho)=8$ case, given two ancilla qubits $n_R=2$ to find the purification subroutine is restricted by $F_\rho = \sqrt{\sum_{j=1}^{4}\lambda_j(\rho)}$ (shown in subplots on the right column)and the final estimated fidelity will deviate from the target value by $\Delta F\approx 9.31\%$ in this specific case study.
    }}
    \label{fig:fid-experiment}
\end{figure}

Next, we conduct simulations to qualitatively study the influence of limited ancillary qubits given high-rank states $k>d_R$. 3 random pairs of density matrices, $\{\rho_A^{(1)},\sigma_A^{(1)}|\rho_A^{(2)},\sigma_A^{(2)}|\rho_A^{(3)},\sigma_A^{(3)}\}$, are prepared with Rank$=\{2,4,8\}$ respectively. Then, we test the performance of VFE with different number of ancillary qubits $n_R = \{1,2,3\}$. The results are summarized in Fig.\ref{fig:fid-experiment}. 
%As one can see, the quality of purification learning is directly influenced by the number of ancillary qubits. 
\update{By providing enough ancilla qubits, VFE could estimate the fidelity between two arbitrary full rank mixed states. If not, the purification subroutine will be restricted to produce pure states with fidelity bounded by Eq.~\eqref{eq:vqsl_bound}, which is consistent with the numerical simulation.}
\update{This scalability requirement offers the flexibility of providing few ancilla qubits for low-rank states but can bring challenges when accurately evaluating the fidelity between two high-rank quantum state in large dimensions. } 
%These experimental results are consistent with the theoretical analysis in Sec.~\ref{sec:sub}

\update{Here we discuss the trainability issues for VFE and possible solutions.
With an extended system dimension introduced by the purification, VFE might exhibit a barren plateau~\cite{mcclean2018barren} due to {a global loss function}. This would result an exponentially suppressed gradient with respect to the problem dimension, a phenomenon known as the barren plateau (BP). Ref.~\cite{arrasmith2020effect} shows BP is independent of the optimization methods, meaning that simply change into a gradient-free optimizer would not help. In order to mitigate this trainability issue, several approaches have been proposed such as the parameter initialization and correlation strategies \cite{Grant2019initialization, volkoff2021large},  layer-wise learning \cite{skolik2020layerwise}, and variable  structure  ansatzes~\cite{grimsley2019adaptive, rattew2019domain, bilkis2021semi}. We leave the adaptation of these strategies to VFE for future study.}

\subsection{Implementation on superconducting quantum processor}
\label{sec:vfe-ibmq}
We also apply our VFE to estimate the fidelity between states which are $\ket{+}$ states affected by dephasing channel $\text{Deph}_p$ with different intensity $p$. The result with comparison to the values achieved from simulation are listed in Table \ref{table:VFE-IBM}. The experiments are performed on the \href{https://quantum-computing.ibm.com/}{\textit{IBM quantum}} platform, loading the quantum device $ibmq\_quito$ containing 5 qubits.

\begin{table}[h]
\small
\centering
\begin{tabular}{|c ||c| c|c|}
 \hline
 Backends & mean & variance & error rate\\ [0.5ex] 
 \hline\hline
 Theoretical value & $0.70710$ & - & - \\ 
 Simulator & $0.70721$ & $0.00002$ & 0.016\%\\
 $ibmq\_quito$ & $0.71692$ & $0.00003$ & 1.388\%\\
 \hline
 \end{tabular}
\caption{Error analysis for VFE between $\rho = \text{Deph}_{p_1}(\ket{+})$ and $\sigma = \text{Deph}_{p_2}(\ket{+})$. We choose $p_1=0.2$ and $p_2=0.9$, and repeat the experiment 10 times independently.}
\label{table:VFE-IBM}
\end{table} 

The input state we choose here are $\rho=\text{Deph}_{p_1}(\ket{+})$ and $\sigma=\text{Deph}_{p_2}(\ket{+}$), where $p_1=0.2$ and $p_2=0.9$. On the quantum device, the input states are prepared similarly to the preparation in Sec. \ref{subsec:VTDE-IBM} and we keep the ancilla as a perification. Next, we operate parameterized quantum circuit on the ancilla, and use sequential minimal optimization \cite{Nakanishi2020Sequential} to optimize the parameters until the loss function converges to its maximum. We repeat 10 independent experiments with same input states and randomly initialized parameters. As demonstrated in Table \ref{table:VFE-IBM}, the estimated fidelity from our method are close to the ideal value with stable performance due to a very small variance.

\section{Conclusion and outlook}\label{sec:conclusion}
In this work, we have introduced near-term quantum algorithms VTDE and VFE to estimate trace distance and quantum fidelity. A strength of our algorithms is that they estimate the metrics directly rather than estimating their bounds. Our algorithms also do not require any assumption on the unknown input states. These algorithms are executable on near-term quantum devices equipped with parameterized quantum circuits. In particular, VTDE could be easily generalized for trace norm estimation for any Hermitian matrix and could avoid the barren plateau issue with logarithmic depth parameterized circuits. 

Beyond benchmarking the quantum algorithms' behavior, our VTDE and VFE could have a wide range of applications in quantum information processing. A direct extension of VTDE might be the estimation for the diamond norm~\cite{Kitaev2000,Watrous2009,watrous2012simpler}, which is a widely-used distance measure for quantum channels. The trace distance can also be applied to quantify the Bell non-locality \cite{Brito_2018}, quantum entanglement \cite{PhysRevA.65.032314}, and the security of quantum cryptography protocols \cite{yuen2014trace}, where VTDE could be utilized as a practical subroutine. VFE may be applied to evaluate the conditional quantum mutual information of tripartite quantum states~\cite{Berta_2016}. It is also of great interest to have a further study on the estimation of sandwiched/Geometric R\'enyi relative entropies~\cite{Muller-Lennert2013,Wilde2014a,Matsumoto2018} 
as an extension of VFE. Efficient estimations of these distance measures could be used to further evaluate the resource measures of entanglement, coherence, magic and other resources in quantum information~\cite{Chitambar2018,Plenio2007,Wang2016d,Vedral2002,Wang2020c,Datta2009,WWW19,Yuan2019,Streltsov2016,Wang2018,Veitch2014}.

\section*{Acknowledgements.}
We thank Runyao Duan, Yuao Chen, and Mark M. Wilde for helpful discussions. 
R. C. and Z. S. contributed equally to this work.
This work was done when R. C., Z. S., and X. Z. were research interns at Baidu Research.

% \bibliographystyle{apsrev4-1}
%\bibliographystyle{unsrt}			
%\bibliography{smallbib}

% \newpage
\vspace{2cm}
\onecolumngrid
\vspace{2cm}
\begin{center}
{\textbf{\large Supplemental Material of Variational Quantum Metric Estimation}}
\end{center}

\renewcommand{\theequation}{S\arabic{equation}}
\renewcommand{\theproposition}{S\arabic{proposition}}
\setcounter{equation}{0}
\setcounter{table}{0}
\setcounter{section}{0}
\setcounter{proposition}{0}

\section{Detailed proofs}\label{sec_appendix}
\subsection{Proof of Proposition~\ref{prop:H sum of some eig}}\label{proof:H sum of some eig}
\renewcommand\theproposition{\ref{prop:H sum of some eig}}
\setcounter{proposition}{\arabic{proposition}-1}
\begin{proposition}
For any Hermitian matrix $H_{AB}\in\cL{(\mathcal{H}_A\otimes\mathcal{H}_B)}$ with spectral decomposition as Eq.~\eqref{eq:spedec}, respectively denote the dimension of $\mathcal{H}_A$, $\mathcal{H}_B$ by $d_A$, $d_B$. It holds that
\begin{equation}
\max_{U} \tr \proj{0}_A \widetilde{H}_A =\sum_{j=1}^{d_B}h^j_{AB},
\end{equation}
where $\widetilde{H}_A=\tr_B \widetilde{H}_{AB}$, $\widetilde{H}_{AB} = UH_{AB}U^\dagger$, and the optimization is over all unitaries.
\end{proposition}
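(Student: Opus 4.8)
The plan is to reduce the stated optimization to the classical Ky Fan maximum principle for sums of the largest eigenvalues of a Hermitian matrix. First I would rewrite the objective using the fact that the partial trace is adjoint to tensoring with the identity, so that for any operator $X_{AB}$ one has $\tr[\proj{0}_A \tr_B X_{AB}] = \tr[(\proj{0}_A \otimes I_B) X_{AB}]$. Applying this with $X_{AB} = \widetilde{H}_{AB} = U H_{AB} U^\dagger$ and invoking cyclicity of the trace gives
\begin{equation}
\tr[\proj{0}_A \widetilde{H}_A] = \tr[(\proj{0}_A \otimes I_B) U H_{AB} U^\dagger] = \tr[(U^\dagger (\proj{0}_A \otimes I_B) U)\, H_{AB}].
\end{equation}

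Next I would observe that $P := \proj{0}_A \otimes I_B$ is an orthogonal projector of rank exactly $d_B$. Because any unitary conjugate of a projector is again a projector of the same rank, and conversely every rank-$d_B$ orthogonal projector on $\mathcal{H}_A \otimes \mathcal{H}_B$ is unitarily equivalent to $P$, the operator $\Pi := U^\dagger P U$ sweeps out \emph{precisely} the set of all rank-$d_B$ orthogonal projectors as $U$ ranges over the full unitary group. Hence $\max_U \tr[\proj{0}_A \widetilde{H}_A] = \max_{\Pi}\tr[\Pi H_{AB}]$, the maximum taken over rank-$d_B$ projectors $\Pi$. The final step is to evaluate this quantity, which is Ky Fan's principle (see e.g.\ Horn--Johnson): the maximum equals the sum of the $d_B$ largest eigenvalues of $H_{AB}$. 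I would give a short self-contained argument by writing $\Pi = \sum_{i=1}^{d_B}\proj{e_i}$ for an orthonormal set $\{\ket{e_i}\}$ and expanding in the eigenbasis $\{\ket{\psi_j}\}$ of $H_{AB}$, obtaining $\tr[\Pi H_{AB}] = \sum_j p_j h^j_{AB}$ with weights $p_j = \sum_i |\braket{\psi_j}{e_i}|^2$. The constraints $0 \le \Pi \le I$ and $\tr\Pi = d_B$ force $p_j \in [0,1]$ and $\sum_j p_j = d_B$, so maximizing the linear functional $\sum_j p_j h^j_{AB}$ over this polytope puts unit weight on the $d_B$ largest eigenvalues; choosing $\Pi$ to project onto $\ket{\psi_1},\dots,\ket{\psi_{d_B}}$ (equivalently a $U$ carrying these into the range of $P$) attains the bound $\sum_{j=1}^{d_B} h^j_{AB}$.

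The main obstacle is making the weight-constraint argument fully rigorous: one must verify that $(p_j)$ really ranges within $\{p : 0 \le p_j \le 1,\ \sum_j p_j = d_B\}$ and that a linear function over this box-with-fixed-sum attains its maximum at an extreme point where $p_j \in \{0,1\}$, selecting the $d_B$ largest $h^j_{AB}$. The bound $p_j \le 1$ follows from $\sum_i \proj{e_i} \le I_{AB}$, and the sum constraint from $\tr\Pi = d_B$; the padding of the spectrum with zeros noted after Eq.~\eqref{eq:spedec} guarantees the top-$d_B$ partial sum is always well defined even when the nominal rank of $H_{AB}$ is small.
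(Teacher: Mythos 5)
Your proof is correct, and it takes a genuinely different route from the paper's. The paper keeps the unitary on $H_{AB}$, notes that $\tr[(\proj{0}_A\otimes I_B)\widetilde{H}_{AB}]$ is the sum of the first $d_B$ diagonal entries of $\widetilde{H}_{AB}$, and invokes the Schur majorization theorem (diagonal entries of a Hermitian matrix are majorized by its eigenvalues, cited from Horn--Johnson) to bound this by $\sum_{j=1}^{d_B}h^j_{AB}$, with equality when $\widetilde{H}_{AB}$ is diagonal in the computational basis. You instead move the unitary onto the measurement operator, recognize that $U^\dagger(\proj{0}_A\otimes I_B)U$ ranges over all rank-$d_B$ orthogonal projectors, and reduce to the Ky Fan maximum principle, for which you supply a self-contained argument via the weights $p_j=\sum_i|\braket{\psi_j}{e_i}|^2$ lying in the polytope $\{0\le p_j\le 1,\ \sum_j p_j=d_B\}$. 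The two key lemmas are of course two faces of the same majorization fact, but your version has the advantage of being self-contained (no black-box citation) and of making the operational content explicit: the circuit is effectively optimizing a rank-$d_B$ POVM element, which connects directly to the Naimark-extension remark the authors make later around Eq.~\eqref{eq:naimark}. The paper's version is shorter given the cited theorem and makes the equality case (diagonalization in the computational basis) slightly more immediate. Both are complete; your attainability step and the handling of the zero-padded spectrum are correctly addressed.
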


\begin{proof}
By the definition of partial trace, we have
\begin{equation}
\begin{aligned}
    \tr \proj{0}_A \widetilde{H}_A=&\tr \proj{0}_A\otimes I_B \widetilde{H}_{AB}\\
    =&\sum_{j=1}^{d_B}\widetilde{H}_{AB}^{j,j},
\end{aligned}
\end{equation}
where $\widetilde{H}_{AB}^{j,j}$ is the $j$-th diagonal element of the matrix representation of $\widetilde{H}_{AB}$. Since $H_{AB}$ and $\widetilde{H}_{AB}$ have the same decreasing spectrum $\{h^j_{AB}\}_{j=1}^{d_Ad_B}$, which majorizes the diagonal elements of $\widetilde{H}_{AB}$ \cite[p. 254, Theorem 4.3.50]{horn2012matrix}, we have
\begin{equation}
\begin{aligned}
    \tr \proj{0}_A \widetilde{H}_A=&\sum_{k=1}^{d_B}\widetilde{H}_{AB}^{k,k}\\
    \le&\sum_{j=1}^{d_B}h^j_{AB}.
\end{aligned}
\label{eq:maj}
\end{equation}

As the equality of Eq. \ref{eq:maj} holds when $\widetilde{H}_{AB}$ is diagonolized under the computational basis, the proof is complete.
\end{proof}

\subsection{Proof of Theorem \ref{thm:tr}}
\label{proof:tr}
\renewcommand\theproposition{\ref{thm:tr}}
\setcounter{proposition}{\arabic{proposition}-1}
\begin{theorem}
For any Hermitian $H_A$ on $n$-qubit system $A$, and any single-qubit pure state $\ket{r}$ on system $R$, it holds that
\begin{align}
\|H_A\|_1 = \max_{U^+} \tr \proj{0}_R Q_R^++\max_{U^-}\tr \proj{0}_R Q_R^-,
\end{align}
where $Q^\pm_R = \tr_AQ^\pm_{AR}$, $Q_{AR}^\pm = U^\pm (\pm H_A \otimes \proj{r}_R) U^{\pm\dagger}$, and each optimization is over unitaries on system $AR$.
\end{theorem}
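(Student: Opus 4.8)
The plan is to reduce the statement to Proposition~\ref{prop:AB} by treating the ancillary qubit $R$ as the single subsystem carrying the local projector and the $n$-qubit system $A$ as the subsystem to be traced out. First I would record the elementary splitting of the trace norm into its positive and negative parts: writing $\{h^j_A\}$ for the eigenvalues of $H_A$, we have
\begin{equation}
\|H_A\|_1 = \sum_{h^j_A>0} h^j_A + \sum_{h^j_A<0}(-h^j_A).
\end{equation}
It therefore suffices to show that $\max_{U^+}\tr\proj{0}_R Q_R^+$ equals the positive sum and $\max_{U^-}\tr\proj{0}_R Q_R^-$ equals the negative one.

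Next I would invoke Proposition~\ref{prop:AB} under the system relabeling in which that proposition's ``first'' system $A$ is played by our qubit $R$ and its ``traced-out'' system $B$ is played by our system $A$. Concretely, $\tr\proj{0}_R Q_R^\pm = \tr\,(I_A\otimes\proj{0}_R)\,Q_{AR}^\pm$ is the expectation of the projector $I_A\otimes\proj{0}_R$, which has rank $d_A=2^n$ since $\proj{0}_R$ is rank one on the qubit $R$. Because $Q_{AR}^\pm$ is a unitary conjugate of $\pm H_A\otimes\proj{r}_R$ and the optimization runs over all unitaries on $AR$, Proposition~\ref{prop:AB} (the Ky~Fan maximum principle underlying it) tells us that $\max_{U^\pm}\tr\proj{0}_R Q_R^\pm$ equals the sum of the $2^n$ largest eigenvalues of $\pm H_A\otimes\proj{r}_R$.

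The heart of the argument, and the step I expect to need the most care, is the spectral bookkeeping for $\pm H_A\otimes\proj{r}_R$. Since $\ket{r}$ is pure, $\proj{r}_R$ has eigenvalues $1$ and $0$, so the $2^{n+1}$ eigenvalues of $H_A\otimes\proj{r}_R$ are exactly the $2^n$ eigenvalues $\{h^j_A\}$ together with $2^n$ copies of $0$; note in passing that this multiset is independent of the choice of $\ket{r}$, which is why the theorem holds for an arbitrary single-qubit pure state. I would then argue that the sum of the $2^n$ largest of these values is precisely $\sum_{h^j_A>0}h^j_A$: every positive eigenvalue of $H_A$ is selected, the remaining slots are filled by zeros (of which there is always a sufficient supply, since there are at least $2^n$ of them), and no negative eigenvalue is ever chosen because $0$ dominates it. The count closes because $H_A$ has at most $2^n$ positive eigenvalues. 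The identical computation for $-H_A\otimes\proj{r}_R$, whose nonzero eigenvalues are $\{-h^j_A\}$, yields $\sum_{-h^j_A>0}(-h^j_A)=\sum_{h^j_A<0}(-h^j_A)$.

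Finally I would add the two maxima and recognize the total as $\sum_{h^j_A>0}h^j_A+\sum_{h^j_A<0}(-h^j_A)=\|H_A\|_1$, which completes the proof. The only points meriting explicit verification are the rank computation for the local projector $I_A\otimes\proj{0}_R$ and the claim that padding the top $2^n$ eigenvalues with zeros rather than with negative eigenvalues is always possible; both follow immediately from $\proj{r}_R$ being a rank-one projector on a single qubit.
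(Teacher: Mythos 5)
Your proposal is correct and follows essentially the same route as the paper's own proof: split $\|H_A\|_1$ into its positive and negative eigenvalue sums, apply Proposition~\ref{prop:AB} with the roles of the measured and traced-out subsystems assigned to $R$ and $A$ respectively, and observe that the spectrum of $\pm H_A\otimes\proj{r}_R$ is that of $\pm H_A$ padded with $2^n$ zeros, so the top-$2^n$ partial sum picks out exactly the positive (resp.\ negative) part. Your explicit bookkeeping of why the zeros, rather than negative eigenvalues, fill the remaining slots is slightly more careful than the paper's phrasing but is the same argument.
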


\begin{proof}
\update{The main idea of the proof of Theorem \ref{thm:tr} is to show that the two maximization item approximates the sum of the positive and negative eigenvalues of $H_A$ respectively. Note that the trace norm $\|H_{A}\|_1$ equals to the sum of the absolute values of $H_{A}$'s eigenvalues:}
\begin{equation}
    \|H_{A}\|_1=\sum_{j=1}^{d_A}|h_A^j|=\sum_{h_A^j>0}h_A^j+\sum_{h_A^j<0}-h_A^j.
    \label{eq:trasum}
\end{equation}

\update{Thus, showing that $\max_{U^+} \tr \proj{0}_R Q^+_R=\sum_{h_A^j>0}h_A^j$ and $\max_{U^-} \tr \proj{0}_R Q^-_R=\sum_{h_A^j<0}-h_A^j$ will complete the proof.}

We first prove that $\max_{U^+} \tr \proj{0}_R Q^+_R=\sum_{h_A^j>0}h_A^j$. Observe that $H_A$ and $H_A\otimes \proj{r}_R$ have the same spectrum, up to $d_A$ 0 eigenvalues. Denote the decreasing spectrum of $H_A\otimes\proj{r}_R$ by $\{h'^j_{AR}\}_{j=1}^{2d_A}$. Then this observation implies that
\begin{equation}
\begin{aligned}
    \sum_{j=1}^{d_A}h'^j_{AR}=\sum_{h^j_{A}>0}h^j_{A}.
\end{aligned}
\label{eq:halsum}
\end{equation}

According to Proposition \ref{prop:AB}, for $Q_{AR}^+ = U^+(H_A\otimes \proj{r}_R)U^{+\dagger}$, $\max_{U^+} \tr \proj{r}_R Q^+_R$ equals to the sum of the first half eigenvalues of $H_A\otimes \proj{r}_R$, therefore
\begin{equation}
    \max_{U^+} \tr \proj{0}_R Q^+_R=\sum_{j=1}^{d_A}h'^j_{AR}=\sum_{h_A^j>0}h_A^j.
    \label{eq:meshalplu}
\end{equation}

Similarly, for $Q_{AR}^- = U^-(- H_A\otimes \proj{r}_R)U^{-\dagger}$, $\max_{U^-} \tr \proj{r}_R Q^-_R$ equals to the sum of the first half eigenvalues of $-H_A\otimes \proj{r}_R$, or to say, the sum of the absolute value of the last half eigenvalues of $H_A\otimes \proj{r}_R$:
\begin{equation}
    \max_{U^-} \tr \proj{0}_R Q^-_R=\sum_{j=d_A+1}^{2d_A}-h'^j_{AR}=\sum_{h_A^j<0}-h_A^j.
    \label{eq:meshalmin}
\end{equation}

Together with Eq. \eqref{eq:trasum}, \eqref{eq:meshalplu}, and \eqref{eq:meshalmin} we have
\begin{align}
\|H_A\|_1 = \max_{U^+} \tr \proj{0}_R Q_R^++\max_{U^-}\tr \proj{0}_R Q_R^-,
\end{align}
which completes the proof.
\end{proof}

\section{A Naive algorithm for trace distance estimation}
From the definition of trace norm, one may naturally try to learn $\|H\|_1$ by twice optimizing over projectors. Explicitly, we have the following proposition:
\renewcommand{\theproposition}{S\arabic{proposition}}
\begin{proposition}
\label{prop:proj}
The trace norm $\|H\|_1$ can be obtained by:
\begin{equation}
\begin{aligned}
    \|H\|_1 = \max_{P_+}\mathrm{Tr}(P_+H) + \max_{P_-}\mathrm{Tr}(-P_-H),
\end{aligned}
\label{eq:nai}
\end{equation}
where each optimization is over projectors. The optimized projector $P_\pm$ is rank-$k_\pm$, where $k_\pm$ is the number of the {positive/negative} eigenvalues of $H$.
\end{proposition}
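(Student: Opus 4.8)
The plan is to evaluate the two maximizations separately and show that they recover exactly the positive and negative parts of the spectrum of $H$, whose absolute values sum to the trace norm. Writing the spectral decomposition $H=\sum_j h^j\proj{\psi_j}$ with orthonormal eigenvectors $\{\ket{\psi_j}\}$, the starting identity is $\|H\|_1=\sum_j|h^j|=\sum_{h^j>0}h^j+\sum_{h^j<0}(-h^j)$. It therefore suffices to establish $\max_{P_+}\tr(P_+H)=\sum_{h^j>0}h^j$ and $\max_{P_-}\tr(-P_-H)=\sum_{h^j<0}(-h^j)$, and then add the two.

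For the first maximization I would expand $\tr(P_+H)=\sum_j h^j\bra{\psi_j}P_+\ket{\psi_j}$ and use that any projector obeys $0\le P_+\le I$, so each coefficient $p_j:=\bra{\psi_j}P_+\ket{\psi_j}$ lies in $[0,1]$. The upper bound $\sum_j h^j p_j\le\sum_{h^j>0}h^j$ then follows term by term, assigning the largest admissible weight to positive eigenvalues and the smallest to negative ones. Achievability is immediate: choosing $P_+=\sum_{h^j>0}\proj{\psi_j}$, the projector onto the positive eigenspace, gives $\tr(P_+H)=\sum_{h^j>0}h^j$ and saturates the bound, with rank equal to the number $k_+$ of positive eigenvalues.

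The second maximization is the same argument applied to $-H$: since $\max_{P_-}\tr(-P_-H)=\max_{P_-}\tr\!\big(P_-(-H)\big)$ and the positive eigenvalues of $-H$ are precisely $\{-h^j:h^j<0\}$, the optimum is $\sum_{h^j<0}(-h^j)$, attained by the projector onto the negative eigenspace of $H$, of rank $k_-$. Adding the two optima and invoking the opening identity yields $\|H\|_1$, and the rank claims follow directly from the two optimizers just exhibited.

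The argument is essentially elementary; the only point needing care is the upper bound, where one must be careful that the weights $p_j$ arise from a genuine projector rather than being freely chosen in $[0,1]$. This is harmless here because the bound $\sum_j h^j p_j\le\sum_{h^j>0}h^j$ holds for \emph{any} $p_j\in[0,1]$, a relaxation of the projector constraint, and the relaxed upper bound is already achieved by an honest projector; hence the relaxation loses nothing and both maxima are tight.
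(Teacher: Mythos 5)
Your proof is correct and follows essentially the same route as the paper's: both identify the optimum of each maximization with the sum of positive (resp.\ negative) eigenvalues, achieved by the projector onto the corresponding eigenspace. The only cosmetic difference is that you bound $\tr(P_+H)=\sum_j h^j\bra{\psi_j}P_+\ket{\psi_j}$ termwise via the weights $\bra{\psi_j}P_+\ket{\psi_j}\in[0,1]$, whereas the paper writes the Jordan decomposition $H=S-T$ and uses $\tr(P_+(S-T))\le\tr(P_+S)\le\tr S$; the two arguments are equivalent.
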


\begin{proof}
Consider the spectrum decomposition of $H$:
\begin{equation}
\begin{aligned}
    H = &\sum_{j=1}^{d}h^j\proj{\psi_j}\\
    =&\sum_{h^j>0}h^j\proj{\psi_j}+\sum_{h^j<0}h^j\proj{\psi_j}
\end{aligned}
\end{equation}
where $d$ is the dimension of $H$. Let
$$S=\sum_{h^j>0}h^j\proj{\psi_j},~T=-\sum_{h^j<0}h^j\proj{\psi_j}$$
so that 
\begin{equation}
    H=S-T,~\|H\|_1=\tr S+\tr T,
    \label{eq:ST}
\end{equation}
and $S$ and $T$ are positive and orthogonal. In this sense, we have
\begin{equation}
\begin{aligned}
    &\tr (P_+H)=\tr (P_+(S-T))\le\tr (P_+S)\le\tr S\\
    &\tr (-P_-H)=\tr (P_-(T-S))\le\tr (P_-T)\le\tr T
\end{aligned}
\label{eq:Ppm}
\end{equation}

The two inequalities become equal when $P_\pm$ is the projector onto the support of $S$ and $T$, respectively. Thus the optimized projectors $P_\pm$ is rank-$k_\pm$, where $k_\pm$ is the rank of $S/T$, i. e., the number of the {positive/negative} eigenvalues of $H$. Eq. \ref{eq:ST} and Eq. \ref{eq:Ppm} imply Eq. \ref{eq:nai}, which finish the proof.
\end{proof}

Using similar technique in the proof of Corollary \ref{cor:oneopt} we can have following corollary on trace distance estimation from Proposition \ref{prop:proj}.
\begin{corollary}
For any quantum states $\rho$ and $\sigma$, it holds that
\begin{equation}
    D(\rho,\sigma)=\max_P\tr P(\rho-\sigma),
\end{equation}
and the optimized projector $P$ is rank-$k$, where $k$ is the number of the positive eigenvalues of $\rho-\sigma$.
\end{corollary}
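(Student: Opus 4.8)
The plan is to reduce the statement to Proposition~\ref{prop:proj} (the naive two-projector formula) by exploiting the fact that $\rho-\sigma$ is traceless. First I would set $H=\rho-\sigma$ and record that, because $\rho$ and $\sigma$ are density operators, $\tr H = \tr\rho-\tr\sigma = 1-1 = 0$. This traceless condition is exactly what lets us collapse the two separate projector optimizations in Eq.~\eqref{eq:nai} into a single one, in direct analogy with the step that turns the two-optimization Theorem~\ref{thm:tr} into the single-optimization Corollary~\ref{cor:oneopt}.

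Next I would import the decomposition used in the proof of Proposition~\ref{prop:proj}: write $H = S - T$ with $S = \sum_{h^j>0}h^j\proj{\psi_j}$ and $T = -\sum_{h^j<0}h^j\proj{\psi_j}$ positive and mutually orthogonal, so that $\|H\|_1 = \tr S + \tr T$ and the two maxima in Eq.~\eqref{eq:nai} are $\max_{P_+}\tr(P_+H) = \tr S$ and $\max_{P_-}\tr(-P_-H) = \tr T$, each attained by the projector onto the support of $S$, respectively $T$. The traceless condition $\tr H = \tr S - \tr T = 0$ then forces $\tr S = \tr T$, whence $\|H\|_1 = 2\tr S$.

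Combining these gives $D(\rho,\sigma) = \frac{1}{2}\|H\|_1 = \tr S = \max_{P}\tr(PH) = \max_P \tr P(\rho-\sigma)$, where the surviving optimization is the one over $P_+$. The optimizer is the projector onto the support of the positive part $S$ of $\rho-\sigma$, whose rank equals the number $k$ of positive eigenvalues of $\rho-\sigma$, which yields the rank claim.

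There is no genuinely hard step here; the only point to be careful about is that the single remaining maximization really is an unconstrained maximization over \emph{all} projectors, i.e.\ that no projector of a different rank can do better. This follows from the inequality chain $\tr(PH) = \tr(P(S-T)) \le \tr(PS)\le \tr S$ already established in Proposition~\ref{prop:proj}, so the corollary requires only assembling pieces rather than any new estimate.
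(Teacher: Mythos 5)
Your proof is correct and follows essentially the same route the paper intends: the paper derives this corollary from Proposition~\ref{prop:proj} by "the same technique as Corollary~\ref{cor:oneopt}," namely using $\tr(\rho-\sigma)=0$ to force $\tr S=\tr T$ and collapse the two projector optimizations into one, exactly as you do. Your explicit check that no projector of a different rank can exceed $\tr S$ just fills in a detail the paper leaves implicit.
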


In practice, any rank-$k$ projector can be characterized by 
\begin{equation}
    P_{\mathrm{rank}-k}=\sum_{j=1}^{k}U\proj{j}U^\dagger,
\end{equation}
and one can obtain $\tr P_{\mathrm{rank}-k}H=\sum_{j=1}^{k}\tr HU\proj{j}U^\dagger$.

If one does not have $k$ as prior knowledge of $\rho,\sigma$, one has to try different $k$ until $\max_U\tr P_{\mathrm{rank}-k}(\rho-\sigma)$ reaches a maximum with respect to $k$. Then one have the following intuitive (but somehow naive) algorithm for trace distance estimation:
\vspace{0cm}
\begin{algorithm}[H] 
\caption{naive Variational Trace Distance Estimation (nVTDE)}
\begin{algorithmic} \label{alg:nvtde}
\REQUIRE $n$-qubit quantum states $\rho$ and $\sigma$, circuit ansatz of unitary $U_{AR}(\bm\theta)$, number of iterations ITR;
\ENSURE an estimation of $D(\rho,\sigma)$.

\FOR{$k=1,\dots,2^n-1$}
\STATE Initialize parameters $\bm\theta$.

\FOR{itr $=1,\ldots,$ ITR}

\STATE Prepare $k$ quantum states $\ket{j},~j\in[1,k]$;
 
\STATE Apply $U(\bm\theta)$ respectively to all $\ket{j}$, and obtain $\cL_4^k=\sum_{j=1}^{k}(\tr U\rho U^\dagger\proj{j}-\tr U\sigma U^\dagger\proj{j})$;
 
\STATE Maximize the loss function $\cL_4^k$ and update parameters of $\bm\theta$;

% \STATE Repeat 4-5 until the loss function $\cL_4^k(\bm\theta)$ converges with tolerance $\varepsilon$;

\ENDFOR
\IF{$\max_{\bm\theta}\cL_4^k(\bm\theta)$ does not increase with $k$}
\STATE break
\ENDIF

\ENDFOR
% \STATE Let $k=k+1$;

% \STATE Repeat 3-7 as long as the $\min_{\bm\theta}\cL_4^k(\bm\theta)$ does not decrease with $k$. Denote the minimum by $\cL_4$;

\STATE Output $\cL_4$ as the trace distance estimation;
\end{algorithmic}
\end{algorithm}

Algorithm \ref{alg:nvtde}, though requires no ancillary qubit, is inefficient for a large $k$. From the perspective of quantum resources, Algorithm \ref{alg:nvtde} needs at least $k$ computational basis quantum states. From the perspective of time, Algorithm \ref{alg:nvtde} also needs $k$ rounds of optimization. In the worst case, $k=2^n-1$.

We remark that, if one of the two unknown states is pure (suppose $\rho$ is the pure state), then $\rho-\sigma$ has at most one positive eigenvalue. Explicitly, we have the following lemma:

\begin{lemma}
If $\rho$ is a pure state and $\sigma$ is an arbitrary state, then $\rho-\sigma$ has at most one positive eigenvalue.
\end{lemma}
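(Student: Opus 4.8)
The plan is to exploit the operator inequality $\rho-\sigma\preceq\rho$ together with the rank-one structure of $\rho$. First I would write $\rho=\proj{\psi}$ for a unit vector $\ket{\psi}$, and recall that $\sigma\succeq 0$. Since $\rho-(\rho-\sigma)=\sigma\succeq 0$, this gives immediately the Loewner ordering $\rho-\sigma\preceq\rho$. The whole statement should then follow by comparing the ordered spectra of the two sides.

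The cleanest route is to invoke monotonicity of ordered eigenvalues under the Loewner order. Writing $\lambda_1\ge\lambda_2\ge\cdots\ge\lambda_d$ for eigenvalues in decreasing order, the Courant--Fischer min-max characterization (equivalently Weyl's monotonicity theorem \cite{weyl1912asymptotische,horn2012matrix}) yields $\lambda_k(\rho-\sigma)\le\lambda_k(\rho)$ for every $k$. The key observation is that $\rho$ is rank one, so its spectrum is $\lambda_1(\rho)=1$ and $\lambda_2(\rho)=\cdots=\lambda_d(\rho)=0$. Applying the inequality at $k=2$ gives $\lambda_2(\rho-\sigma)\le\lambda_2(\rho)=0$, hence the second-largest and every subsequent eigenvalue of $\rho-\sigma$ is non-positive. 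Therefore at most $\lambda_1(\rho-\sigma)$ can be strictly positive, which is exactly the claim.

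If one prefers a self-contained argument that avoids citing eigenvalue monotonicity, I would instead argue by contradiction: were there two positive eigenvalues, there would exist a two-dimensional subspace $V$ on which $\braandket{x}{\rho-\sigma}{x}>0$ for all nonzero $x\in V$. But $V$ and the hyperplane $\ket{\psi}^{\perp}$ (of dimension $d-1$) must intersect in dimension at least $2+(d-1)-d=1$, so there is a nonzero $x_0\in V$ with $\braket{\psi}{x_0}=0$. For that vector $\braandket{x_0}{\rho-\sigma}{x_0}=|\braket{\psi}{x_0}|^2-\braandket{x_0}{\sigma}{x_0}=-\braandket{x_0}{\sigma}{x_0}\le 0$, contradicting positive-definiteness on $V$.

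I do not expect a genuine obstacle here; the statement is elementary and both arguments are short. The only point demanding care is bookkeeping with the ordering convention and the correct direction of the eigenvalue inequality, namely that an operator inequality $A\preceq B$ pushes every \emph{ordered} eigenvalue up rather than down, so that the rank-one bound $\lambda_2(\rho)=0$ is what caps the number of positive eigenvalues of $\rho-\sigma$.
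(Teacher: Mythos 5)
Your first argument is essentially the paper's own proof: the paper applies Weyl's inequality $a^j+b^k\le c^{j+k-d}$ with $A=\rho-\sigma$, $B=\sigma$, $k=d$, which (since $\lambda_{\min}(\sigma)\ge 0$) is exactly the eigenvalue monotonicity $\lambda_j(\rho-\sigma)\le\lambda_j(\rho)$ you invoke, combined with $\lambda_2(\rho)=0$. Your second, self-contained contradiction argument via the dimension count on $V\cap\ket{\psi}^{\perp}$ is also correct and gives an elementary alternative that avoids citing Weyl's theorem; both are valid.
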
 

\begin{proof} According to the Weyl's inequality in matrix theory (see \cite{weyl1912asymptotische}), for Hermitian matrices $A,B$, $C=A+B$ , we have 
\begin{equation}
    c^{j+k-1}\le a^j+b^k\le c^{j+k-d},
\end{equation}
where $a^j,b^j$ and $c^j$ are the $j-$th largest eigenvalue of $A,B$ and $C$, respectively, and $d$ is the dimension. Let $k=d=2^n,A=\rho-\sigma,B=\sigma$ and use the second inequality:
\begin{equation}
a^j+b^d\le c^{j}.
\end{equation}

We use the fact that $\sigma$ is positive (which implies $b^d\ge0$), and Rank$(\rho)=1$ (which implies $c^j=0$ for $j\ge2$). Finally we have
\begin{equation}
a^j\le0, \forall j\ge2,
\end{equation}
which complete the proof that $\rho-\sigma$ has at most one positive eigenvalue.
\end{proof}

\update{The above lemma can also be proved if we consider whether the support of $\rho$ is in the support of $\sigma$ and analyze each case separately.}

Thus, if one of the two states is pure, we only have to prepare one rank-1 projector $\proj{0}$.

\begin{figure}[h]
    \centering
    \includegraphics[width=0.5\textwidth]{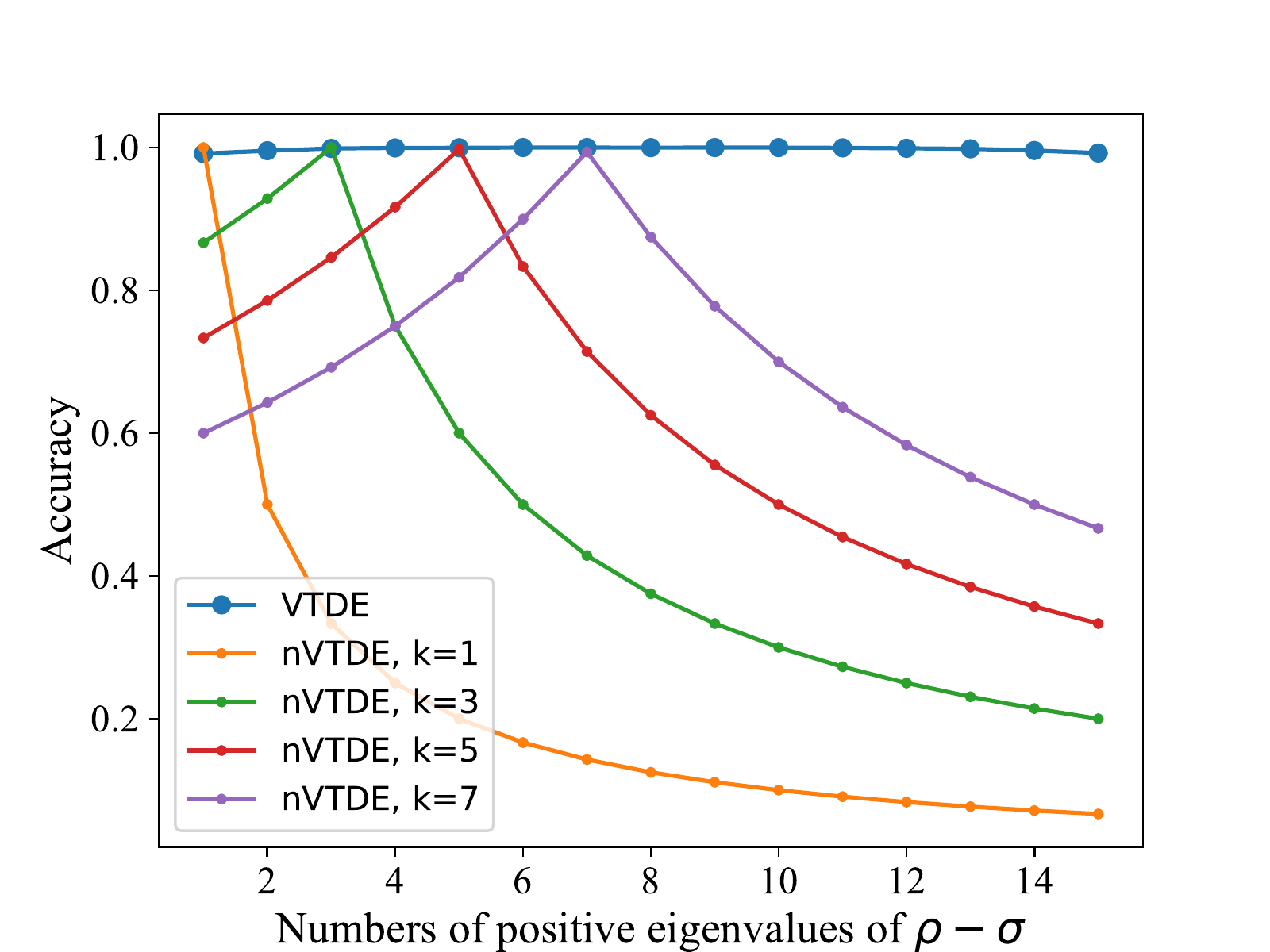}
    \caption{Trace distance learned by VTDE and nVTDE versus the number of positive eigenvalues of $H=\frac{1}{2}(\rho-\sigma)$. The curves show that the performance of VTDE is stable with different numbers of positive eigenvalues. In contrast, the performance of nVTDE relies heavily on the difference between $k$ and the number of positive eigenvalues.}
    \label{fig:naive}
\end{figure}

Fig. \ref{fig:naive} shows the behavior of VTDE and nVTDE for $\rho-\sigma$ with different number of positive eigenvalues. In the $n=4$ case, the number of positive eigenvalues can range from 1 to $2^4-1$. Fig. \ref{fig:naive} shows that the performance of VTDE is stable with different numbers of positive eigenvalues, while the performance of nVTDE relies heavily on the difference between $k$ and the number of positive eigenvalues, as analyzed above.

\section{Analytical Gradient for VFE}
\label{proof:gradient_VFE}
Recall the loss function for VFE,
\begin{align}
\cL(\bm \t) = \, _{AR}\bra{\psi}I_A\otimes U_R(\bm \t) \ket{\phi}_{AR}.
\end{align}
Consider the parametrized quantum circuit $U(\bm \theta)= \Pi_{i = n}^{1}U_i(\theta_i)$, where each gate $U_{i}$ is either fixed, e.g., C-NOT gate, or generated by $U_{i}=e^{-i\theta_i H_{i}/2}$. For convenience, denote $U_{i:j} = U_i \cdots U_j$. We can write the cost function as
\begin{align}
\cL(\bm \t) = \, _{AR}\bra{\psi}I_A\otimes U_{1:n}(\theta_{1:n}) \ket{\phi}_{AR}.
\end{align}
Absorb all gates except the one we want to calculate the gradient into $\langle\psi|$ and $|\phi\rangle$:
\begin{align}
\cL(\bm \t) &= \, _{AR}\bra{\psi}(I_A\otimes U_1)\cdots (I_A\otimes U_j) \cdots (I_A\otimes U_n) \ket{\phi}_{AR} \notag \\
& = \, _{AR}\bra{\psi'} (I_A\otimes U_j)  \ket{\phi'}_{AR},
\end{align}
where $\ket{\phi'}_{AR} = (I_A\otimes U_{j+1:n}) \ket{\phi}_{AR}$ and $\bra{\psi'}_{AR} = \bra{\psi}_{AR} (I_A\otimes U_{1:j-1})$.
With the following property,
\begin{align}
\frac{\partial U_{1:n}}{\partial \theta_j} = -\frac{i}{2}U_{1:j-1} (H_j U_j) U_{j+1:n},
\end{align}
the gradient is calculated to be
\begin{align}
\frac{\partial \cL(\bm \t)}{\partial \theta_j} = -\frac{i}{2} \, _{AR}\bra{\psi'} (I_A\otimes H_jU_j) \ket{\phi'}_{AR}.
\end{align}
We can absorb the Pauli product $H_j$ into  $U_j$ by an rotation on $\theta_j \rightarrow \theta_j \pm \pi$:
\begin{align}
U_j(\pm \pi) & = e^{\mp i\pi H_j/2} \notag\\
% & = \cos(\frac{\pi}{2}) I \mp i \sin(\frac{\pi}{2})H_j  \notag\\
& = \mp i H_j,
\end{align}
which leads to
\begin{align}
\frac{\partial \cL(\bm \t)}{\partial \theta_j} & = \frac{1}{2} \, _{AR}\bra{\psi} (I_B\otimes U_j(\theta_j + \pi))  \ket{\phi}_{AR} \notag\\
& = \frac{1}{2} L(\theta_j + \pi).
\end{align}

\section{Quantum Circuit for VFE}
\label{cir:VFE}
There are two types of parametrized quantum circuit used in VFE. The first type $U_{R}(\bm \t)$ only acts on the ancillary system and aims at fulfilling Algorithm \ref{alg:vfe}. In this specific study, it is taken to be the $U_3$ rotation gate on Bloch sphere, 2-qubit universal gate~\cite{vidal2004universal}, and 3-qubit universal gate~\cite{vatan2004realization} for $n_A = \{1,2,3\}$ respectively. The second type serves for Algorithm \ref{alg:vqsl} purification learning and is shown in Fig.~\ref{fig:VQSL-Ansatz}.

\begin{figure}[h]
\[\Qcircuit @C=0.8em @R=0.5em{
&\gate{U_3(\t_1,\t_2,\t_3)}&\ctrl{1} &\qw       &\targ  &\qw  &\cdots\\ 
&\gate{U_3(\t_4,\t_5,\t_6)}&\targ    &\ctrl{1}  &\qw  &\qw  &\cdots\\
&\gate{U_3(\t_7,\t_8,\t_9)}&\qw      &\targ     &\ctrl{-2} &\qw^{\quad \quad \quad \,\times L}  &\cdots
\gategroup{1}{2}{3}{5}{2.07em}{--}
}\]
\caption{Parameterized ansatz $U_{AR}(\bm\theta)$ used for state learning. Depth $L$ denotes the number of repetitions of the same block (dashed-line box). Each block consists of a column of general $U_3(\t,\phi,\varphi) = R_z(\phi)R_y(\t)R_z(\varphi)$ rotations for each qubit, followed with a circular layer of CNOT gates. Notice that the number of parameters in this ansatz increases linearly with the circuit depth $L$ and the number of qubits.}
\label{fig:VQSL-Ansatz}
\end{figure}
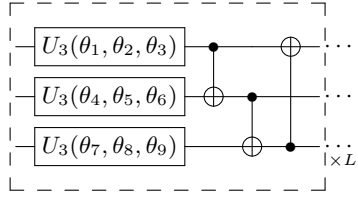

\end{document}